\newtheorem{remark}{Remark}
\newtheorem{prop}{Proposition}
\newtheorem{proof}{proof}
\begin{document}

%
\title{Optimal Coordinated Transmit Beamforming for Networked Integrated Sensing and Communications}

\author{Gaoyuan~Cheng, Yuan~Fang, Jie~Xu, and Derrick Wing Kwan Ng
\thanks{Part of this paper has been presented at the IEEE International Conference on Communications (ICC) 2023 \cite{Cheng2023Coordinated}.
}
\thanks{G. Cheng, Y. Fang, and J. Xu are with the School of Science and Engineering (SSE), Future Network of Intelligence Institute (FNii), The Chinese University of Hong Kong (Shenzhen), Shenzhen 518172, China (e-mail: gaoyuancheng@link.cuhk.edu.cn, fangyuan@cuhk.edu.cn, xujie@cuhk.edu.cn). J. Xu is the corresponding author.}
\thanks{D. W. K. Ng is with the School of Electrical Engineering and Telecommunications, University of New South Wales, Sydney, NSW 2052,  Australia (e-mail: w.k.ng@unsw.edu.au).}
}

\maketitle


\begin{abstract}
This paper studies a multi-antenna networked integrated sensing and communications (ISAC) system, in which a set of multi-antenna base stations (BSs) employ the coordinated transmit beamforming to serve multiple single-antenna communication users (CUs) and perform joint target detection by exploiting the reflected signals simultaneously. To facilitate target sensing, the BSs transmit dedicated sensing signals combined with their information signals. Accordingly, we consider two types of CU receivers with and without the capability of canceling the interference from the dedicated sensing signals, respectively. In addition, we investigate two scenarios with and without time synchronization among the BSs. For the scenario with synchronization, the BSs can exploit the target-reflected signals over both the direct links (BS-to-target-to-originated BS links) and the cross-links (BS-to-target-to-other BSs links) for joint detection, while in the unsynchronized scenario, the BSs can only utilize the target-reflected signals over the direct links. For each scenario under different types of CU receivers, we optimize the coordinated transmit beamforming at the BSs to maximize the minimum detection probability over a particular targeted area, while guaranteeing the required minimum signal-to-interference-plus-noise ratio (SINR) constraints at the CUs. These SINR-constrained detection probability maximization problems are recast as non-convex quadratically constrained quadratic programs (QCQPs), which are then optimally solved via the semi-definite relaxation (SDR) technique. The numerical results show that for each considered scenario, the proposed ISAC design achieves enhanced target detection probability compared with various benchmark schemes. In particular, enabling time synchronization and sensing signal cancellation at the BSs is always beneficial for further improving the joint detection and communication performance.
\end{abstract}

\begin{IEEEkeywords}
Networked integrated sensing and communications (ISAC), coordinated transmit beamforming, target detection, semi-definite relaxation, likelihood ratio test.
\end{IEEEkeywords}

\section{Introduction}


\IEEEPARstart{I}{ntegrated} sensing and communications (ISAC) has been recognized as one of the usage scenarios of sixth-generation (6G) wireless networks \cite{Fan2020JointRadar, Fan2023Sensing, Wei20226g, Mu2023noma,  liu2022integrated} to support emerging applications such as auto-driving, smart city, industrial automation, and unmanned aerial vehicles (UAVs) \cite{Qingqing2023UAV, mozaffari2021toward, Lyu2023Joint}. Specifically, ISAC allows the sharing of cellular base station (BS) infrastructures, signal processing modules, as well as scarce spectrum and power resources for the dual roles of wireless communications and radar sensing. This not only enhances the utilization efficiency of limited resources, but also enables seamless coordination and mutual assistance between communication and sensing for improving their performances with reduced costs. In particular, by enabling the joint optimization of the sensing and communication transmit waveforms, beamforming, and resource allocation, ISAC can efficiently harness the co-channel interference and provide new design degrees of freedom for enhancing system performance.

Conventionally, mono-static and bi-static ISAC systems have been widely investigated in the literature (see, e.g., \cite{Yuanwei2022NOMA, Zhang2022Holographic, liu2017robust, liu2018mimoradar, Khawar2015Target,  kumari2019adaptive, hua2023optimal, Fan2018MU-MIMO, YongZeng2022Waveform, Liu2020JointTransmit, Fan2018Toward, Clerckx2021Rate-Splitting} and the references therein), in which one BS serves as an ISAC transceiver or two BSs operate as the ISAC transmitter and the sensing receiver, respectively. For instance, the authors in \cite{hua2023optimal, Liu2020JointTransmit} considered the transmit beamforming in a downlink ISAC system, where a BS sends combined information-bearing and dedicated sensing signals to perform downlink multiuser communication and radar target sensing simultaneously. In particular, two joint beamforming designs were investigated in \cite{hua2023optimal}, one aimed to match the transmit beampattern with a desired one and the other aimed to maximize the transmit beampattern gains towards desired target directions, while ensuring the communication quality of service (QoS) requirements. Also, the mean square error of the beampattern as well as the cross correlation pattern were optimized in \cite{Liu2020JointTransmit} to enhance the performance of multiple-input multiple-output (MIMO) radar, subject to the communication QoS constraints. Moreover, a multi-antenna ISAC system adopting rate splitting multiple access (RSMA) was studied in \cite{Clerckx2021Rate-Splitting}, in which the joint transmission of communication streams and radar sequences was jointly optimized for optimizing the ISAC performance. However, the mono-static and bi-static ISAC systems can only offer limited service coverage and the resulting sensing and communication performances may degrade when there are rich obstacles in the environment and/or when the communication users (CUs) and sensing targets are located far apart from the BS.

Recently, motivated by multi-BS cooperation for communications (e.g., coordinated multi-point transmission/reception \cite{Gesbert2010Multi,Yu2010Coordinated}, cloud-radio access networks (C-RAN) \cite{Wu2015Cloud}, cell-free MIMO \cite{Ngo2017Cell-Free}, etc.) and distributed MIMO radar sensing \cite{Fishler2004mimoradar, Daum2009mimo, Xu2008Target, haimovich2007mimo, Shi2022Device}, the notions of networked ISAC \cite{huang2022coordinated, behdad2022power, wang2020constrained} or perceptive mobile networks \cite{Rahman2019Framework, zhang2021perceptive, JAZhang2021Uplink, Eldar2023pmn} have drawn significant research momentum to address the aforementioned issues. On the one hand, as compared with conventional cellular architectures, C-RAN and cell-free MIMO allow  centralized signal processing at the cloud to enable cooperative transmission and reception among distributed BSs, thus effectively mitigating or even exploiting the interference originating from different CUs to enhance their communication performance \cite{Gesbert2010Multi, Wu2015Cloud, Ngo2017Cell-Free}.  On the other hand, a distributed MIMO radar is able to exploit the inherent spatial diversity of target radar cross section (RCS) by designing orthogonal waveforms, thus enhancing both the sensing accuracy in estimating target parameters and the detection probability \cite{haimovich2007mimo,fishler2006spatial}. Besides, the coherent processing in MIMO radar can be further adopted to acquire high-resolution target detection exploiting both time and phase synchronization among different radar transceivers \cite{Lehmann2006High}. As such, by unifying the BSs' cooperative communications and distributed MIMO radar in integrated systems, networked ISAC is envisioned to provide seamless sensing and communication coverage, efficient interference management, enhanced communication data rate, high-resolution and high-accuracy detection and estimation \cite{Godrich2010Target}, as well as reduced energy and hardware costs.

In the literature, there have been a handful of prior works, e.g., \cite{huang2022coordinated, behdad2022power, wang2020constrained}, studying networked ISAC. For instance, the authors in \cite{huang2022coordinated} studied a single-antenna networked ISAC system, in which different BSs jointly optimized their transmit power control to minimize the total transmit power consumption, while fulfilling the required individual signal-to-interference-plus-noise ratio (SINR) constraints set by the associated CUs and the estimation accuracy or Cram{\' e}r-Rao bound (CRB) constraint for localizing a target. In addition, the work \cite{behdad2022power} considered cell-free massive MIMO for networked ISAC with regularized zero-forcing (ZF) transmit beamforming, in which the BSs jointly optimized their transmit power control over different CUs to maximize the sensing signal-to-noise ratio (SNR) while ensuring the minimum required communication SINR at the CUs. Furthermore, the work \cite{wang2020constrained} investigated the network utility maximization problem for a multi-UAV networked ISAC system, in which multiple UAVs serve a group of CUs and cooperatively sense the target simultaneously. Despite the research progress on networked ISAC, however, these prior works only considered the design of transmit power control at single-antenna BSs \cite{huang2022coordinated, wang2020constrained} or employed the simplified regularized ZF transmit beamforming at multi-antenna BSs \cite{behdad2022power}. Furthermore, these works only reused the information signals for performing target sensing \cite{huang2022coordinated, wang2020constrained} or employing only one additional common dedicated sensing beam for facilitating the target detection at a known location \cite{behdad2022power}. To the best of our knowledge, the joint optimization of transmit information beamforming and general-rank transmit sensing beamforming in multi-antenna networked ISAC systems has not been well investigated in the literature yet. This task, however, is particularly challenging due to the following reasons. First, while the information and sensing signals at multiple BSs may cause interference at different CUs, these signals can also be jointly exploited for performing target sensing. As a result, this introduces a new tradeoff between mitigating the co-channel interference to enhance the communication performance and increasing the sensing signal strength to improve the sensing performance. Next, practical dedicated sensing signals can be generated offline and are known to each CU prior to transmission \cite{hua2023optimal}. Therefore, this presents a new opportunity to exploit interference cancellation to enhance the SINR at the CUs. However, the impact of such interference cancellation on the networked ISAC has not been investigated thus far. Furthermore, distributed MIMO sensing may exploit the cross-link echo signals from one sensing transmitter to a target captured by another sensing receiver for facilitating sensing \cite{haimovich2007mimo}, when perfect synchronization can be achieved between them. Indeed, investigating the impact of such synchronization in the performance of networked ISAC is an intriguing problem that remains unexplored. To address the above issues, the joint design of target detection and multiuser communication in networked ISAC are of utmost importance.

This paper studies a multi-antenna networked ISAC system, in which a set of multi-antenna BSs employ coordinated transmit beamforming to serve their associated CUs and at the same time reuse the reflected wireless signals to perform joint target detection. Our main results are summarized as follows.

\begin{itemize}
\item To fully utilize the degrees of freedom for sensing, the BSs sends dedicated sensing signals in addition to communication signals. Accordingly, to exploit the benefit of these newly introduced dedicated sensing signals, we consider two types of CU receivers: those without the capability of canceling the interference from dedicated sensing signals (Type-I receivers) and those with the capability to perform that (Type-II receivers), respectively. 

\item We consider two target detection scenarios depending on the availability of time synchronization among the BSs. In Scenario I, these BSs are all synchronized in time such that they can exploit the target-reflected signals over both the direct links (BS-to-target-to-originated BS links) and the cross links (BS-to-target-to-other BSs links) for joint detection. In Scenario II, these BSs are not synchronized and thus they can only utilize the target-reflected signals over their direct links for joint detection. For each of the two scenarios, we analyze the likelihood ratio test for detection and accordingly derive the detection probability subject to a required false alarm probability at any given target location, showing that the detection probability is monotonically increasing with respect to the total received reflection-signal power (over the utilized links for each scenario) at the BSs.

\item Based on the derivation in each scenario and by considering each type of CU receivers, we propose the coordinated transmit beamforming design at the BSs to maximize the minimum detection probability (or equivalently the minimum total received reflection-signal power) over a particular targeted area, while satisfying the minimum SINR constraints at the CUs, subject to the maximum transmit power constraints at the BSs. These problems are recast as non-convex quadratically constrained quadratic programs (QCQPs), which are then optimally solved via the semi-definite relaxation (SDR) technique. In particular, we rigorously prove that the adopted SDRs are tight for these QCQPs and the optimal rank-one solutions for information beamforming can be properly constructed based on the optimal solution of the SDRs.

\item Finally, we provide numerical results to validate the performance of our proposed designs as compared to two benchmark schemes that perform ZF information beamforming and conduct the target detection only via dedicated sensing signals, respectively. It is shown that for each scenario, the proposed ISAC design achieves a higher detection probability than the benchmark schemes. It is also shown that ensuring time synchronization among BSs in Scenario I is consistently enhances the detection performance. Moreover, under both Scenario I and Scenario II, we show that Type-II CUs equipped with sensing interference cancellation capability outperform their Type-I counterparts and other benchmark designs in terms of detection performance, due to the higher flexibility in interference management of the Type-II CUs, which is always beneficial.
\end{itemize}

The remainder of this paper is organized as follows. Section II presents the networked ISAC system model. Section III derives the detection probability under a specific false alarm probability at any target location. Section IV presents the optimal coordinated transmit beamforming optimization problems for the two considered scenarios, respectively. Section V provides numerical results to validate the performance of our proposed schemes. Section VI concludes this paper.

{\textit{Notations:}} Vectors and matrices are denoted by boldface lowercase and uppercase letters, respectively. $\bf{I}$ denotes an identity matrix with appropriate dimension. $\mathbb E (\cdot)$ denotes the statistical expectation. ${\mathop{\rm var}} \left(  \cdot  \right)$ denotes the statistical variance. For a scalar $a$, $\left| a \right|$ denotes its absolute value. For a vector $\bf v$, $\left\| {\bf{v}} \right\|$ denotes its Euclidean norm. For a matrix ${\bf M}$ of arbitrary dimension, ${\bf M}^T$ and ${\bf M}^H$ denote its transpose and conjugate transpose, respectively. ${\mathbb C}^{x \times y}$ denotes the space of $x \times y$ complex matrices. ${\mathop{\rm Re}\nolimits} \left(  \cdot  \right)$ denotes the real part of a complex number, vector, or matrix. $\cal{N}({\bf x}, {\bf Y})$ and $\cal{CN}({\bf x}, {\bf Y})$ denote the real-valued Gaussian and the circularly symmetric complex Gaussian (CSCG) distributions with mean vector ${\bf x}$ and covariance matrix ${\bf{Y}}$, respectively, and ``$ \sim $" means ``distributed as". $Q(\cdot)$ denotes the Q-function. ${\rm{rank}}\left( {\cdot} \right)$ denotes the rank of a matrix.

\begin{figure}[ht]
\centering
    \includegraphics[width=8cm]{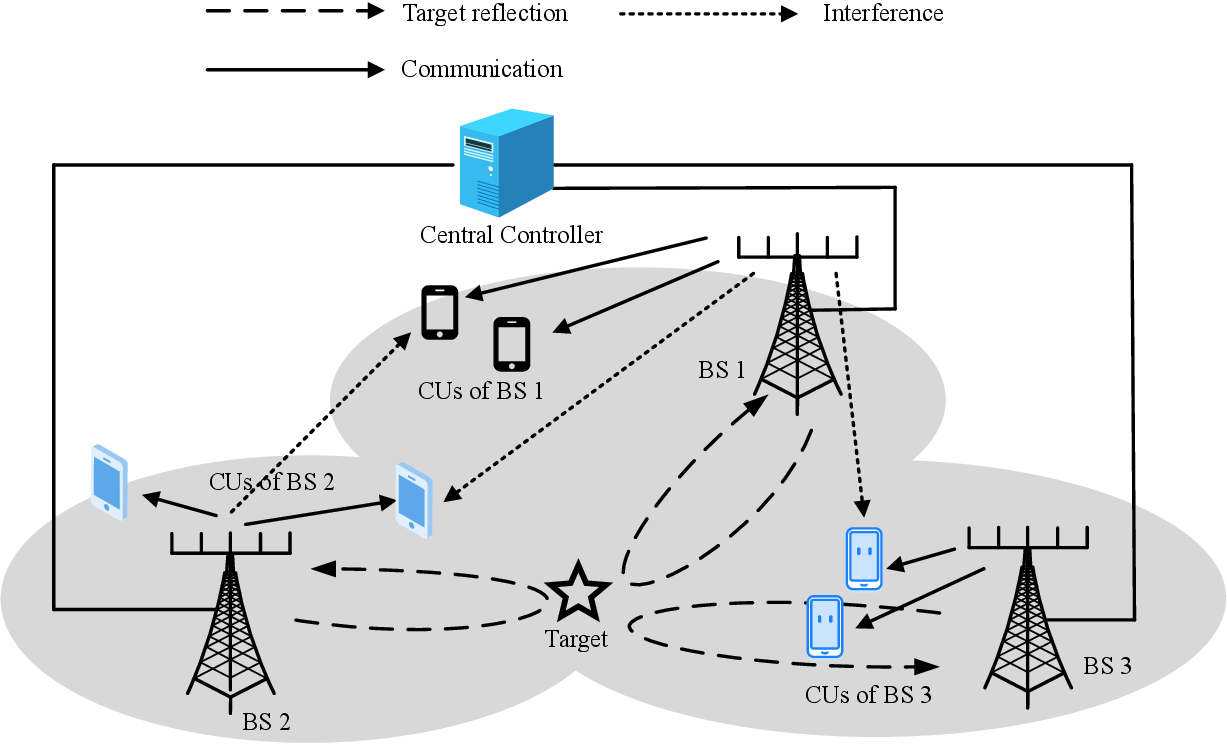}
\caption{An example of the considered multi-antenna networked ISAC system model with three coordinated BSs.}
\label{fig:1}
\end{figure}

\section{System Model}

We consider a multi-antenna networked ISAC system consisting of $L$ BSs each with $N_t>1$ transmit and $N_r>1$ receive antennas, where each BS serves the same number of $K$ single-antenna CUs. Note that the total number of CUs is $LK$,  let $ {\cal L} \triangleq \left\{ {1, \ldots ,L} \right\}$ and $ {\cal K}_l \triangleq \left\{ {1, \ldots ,K} \right\}$ denote the set of BSs and the set of CUs in each cell associated with BS $l$, respectively. In this system, the BSs send individual messages and dedicated sensing signals to their associated CUs. At the same time, the BSs receive and properly process the reflected signals and then convey them to a central controller (CC) for joint target detection, cf. Fig. 1. As such, the multi-antenna networked ISAC system unifies the multi-antenna coordinated beamforming system for communication \cite{Yu2010Coordinated} and the distributed MIMO radar for target detection \cite{haimovich2007mimo}, as will be detailed next. Specifically, we focus on the ISAC transmission over a communication block with duration $T$ that consists of $N$ symbols, where $T = N T_{\mathrm s}$ with $T_{\mathrm s}$ denoting the duration of each symbol. Here, $T$ or $N$ is assumed to be sufficiently large for the ease of analysis \cite{huang2022coordinated}. Let $\mathcal T \triangleq (0,T]$ denote the ISAC period of interest and $\mathcal N \triangleq \{1, \ldots, N\}$ the set of symbols.

First, we consider the communication from the BSs to the CUs, in which the coordinated transmit beamforming is employed at these BSs. Let $\bar s_{l,i}\left(t\right) \in {\mathbb C}$ denote the communication signal sent by BS $l \in {\cal L}$ for CU $i \in {\cal K}_l$ at time $t \in {\cal T}$, ${\bf{w}}_{l,i} \in {{\mathbb C}^{N_t\times 1}}$ denote the corresponding transmit beamforming vector by BS $l \in \mathcal L$, and ${\bf{\bar s}}_l^r\left( t \right) {{\mathbb C}^{N_t\times 1}}$ denote the dedicated sensing signal sent by BS $l$ at time $t$ with zero mean and covariance matrix ${\bf{R}}_l^r = {\mathbb E}[{\bf{\bar s}}_l^{r}\left( t \right){\bf{\bar s}}_l^{rH}\left( t \right)] \succeq {\bf 0}$. With loss of generality, we assume $r_t = {\text{rank}}({\bf R}_l^r)$, As such, we express ${\bf{\bar s}}_l^r \left( t \right)$ as
\begin{align}
{\bf{\bar s}}_l^r\left( t \right)= \sum\limits_{k = 1}^{{r_t}} {{{\bf{w}}_{l,k}^r}\bar s_{l,k}^r}(t),
\end{align}
where ${\bar s_{l,k}^r}(t)$ denotes the $k$th waveform of BS $l$ that is modeled by an independently generated pseudorandom signal with zero mean and unit variance, and ${\bf w}_{l,k}^r$ denotes the corresponding transmit beamforming vector that can be determined based on ${\bf{R}}_l^r$ via eigenvalue decomposition (EVD). Denote $s_{l,i}[n]$, ${s}_{l,k}^r[n]$, and ${\bf{s}}_l^r[n]$ as the sampled signals of $\bar s_{l,i}\left(t\right)$, ${{\bar s}}_{l,k}^r\left( t \right)$, and $\bar{ \bf{s}}_l^r\left( t \right)$, respectively, at each symbol $n\in \mathcal N$. Here, $\{{s}_{l,i}[n]\}$ are assumed to be independent and identically distributed (i.i.d.) random variables with zero mean and unit variance.

Let ${{\bf{h}}_{l,m,i}}\in \mathbb{C}^ {N_t\times 1}$ denote the channel vector from BS $l\in \mathcal L$ to CU $i\in {\mathcal K}_m$ that is located in cell $m \in \cal L$. Then, the received signal by CU $k \in {\mathcal K}_m$ at cell $m$ in symbol $n \in \mathcal N$ is
\begin{align}
{y_{m,k}}[ n ] &= {\bf{h}}_{m,m,k}^H{{\bf{w}}_{m,k}}{s_{m,k}}[ n ] + \! \underbrace {\sum\limits_{i \in {\cal K}_m,i \ne k} \! \! {{\bf{h}}_{m,m,k}^H{{\bf{w}}_{m,i}}{s_{m,i}}[n]} }_{{\rm{intra-cell~ interference}}}\nonumber \\ &+\underbrace {\sum\limits_{l \in {\cal L},l \ne m} {{\bf{h}}_{l,m,k}^H\sum\limits_{i \in {\cal K}_l} {{{\bf{w}}_{l,i}}{s_{l,i}}[n]} } }_{{\rm{inter-cell~interference}}}+ \underbrace {\sum\limits_{l \in {\cal L}} {{\bf{h}}_{l,m,k}^H{\bf{s}}_l^r[n]} }_{{\rm{sensing~interference}}} \nonumber \\ &+ {z_{m,k}}[ n ], \label{comrecsig}
\end{align}
where ${z_{m,k}}[ n ]  \sim  {\cal{CN}} \left( {0,\sigma ^2_c} \right)$ denotes the noise at the receiver of CU $k$ at cell $m$, with $\sigma_c^2$ denoting the corresponding noise power. It is observed in \eqref{comrecsig} that each CU suffers from both the intra-cell and inter-cell interference, as well as the interference from the dedicated sensing signals. In practice, the noise term ${z_{m,k}}[ n ]$ may also include the background and clutter interference \cite{xianxin2023ICC}. Notice that $\{{\bf {\bar s}}_{l}^r(t)\}$ are predetermined pseudorandom signals that can be {\textit{a-priori}} known by all the BSs and CUs. Therefore, we consider two different types of CUs: those without the capability of canceling the interference caused by the dedicated sensing signal \cite{hua2023optimal}, referred to as Type-I receivers, and those with the capability, referred to as Type-II receivers, respectively.

{\bf {Type-I receivers}}: Each Type-I receiver $k$ in cell $m$ is not equipped with the capability to cancel the interference generated by the dedicated sensing signal ${\bf{s}}_l^r[n]$. The SINR of CU $k$ in cell $m$ is given by \eqref{sinr:I}.
\begin{figure*}
\begin{align}
\gamma _{m,k}^{\rm I} \left( {\left\{ {{{\bf{w}}_{l,i}}} \right\},\left\{ {{\bf{R}}_l^r} \right\}} \right) = \frac{{{{\left| {{\bf{h}}_{m,m,k}^H{{\bf{w}}_{m,k}}} \right|}^2}}}{{\sum\limits_{i \in {\cal K}_m, i \ne k} {{{\left| {\bf{h}}_{m,m,k}^H{{\bf{w}}_{m,i}} \right|}^2}}  + \sum\limits_{l \in {\cal L},l \ne m} {\sum\limits_{i \in {\cal K}_l} \left|{{{\bf{h}}_{l,m,k}}{{\bf{w}}_{l,i}}}\right|^2 }  + \sum\limits_{l \in {\cal L}} {{\bf{h}}_{l,m,k}^H{\bf{R}}_l^r{{\bf{h}}_{l,m,k}}}  + \sigma _c^2}}. \label{sinr:I}
\end{align}
\end{figure*}

{\bf {Type-II receivers}}: Each Type-II receiver $k$ in cell $m$ is dedicatedly designed for the ISAC system with the capability to cancel the interference generated by dedicated sensing signal ${\bf{s}}_l^r[n]$ before decoding its desired communication signal ${{s}}_{m,k}[n]$. In this case, the SINR of CU $k$ in cell $m$ is given by \eqref{sinr:II}.
\begin{figure*}
\begin{align}
\gamma _{m,k}^{\rm II} \left( {\left\{ {{{\bf{w}}_{l,i}}} \right\},\left\{ {{\bf{R}}_l^r} \right\}} \right) = \frac{{{{\left| {{\bf{h}}_{m,m,k}^H{{\bf{w}}_{m,k}}} \right|}^2}}}{{\sum\limits_{i \in {\cal K}_m,i \ne k} {{{\left| {{\bf{h}}_{m,m,k_m}^H{{\bf{w}}_{m,i_m}}} \right|}^2}}  + \sum\limits_{l \in {\cal L},l \ne m} \sum\limits_{i \in {\cal K}_l} \left|{\bf{h}}_{l,m,k}^H{\bf{w}}_{l,i}\right|^2   + \sigma _c^2}}. \label{sinr:II}
\end{align}
\hrulefill
\end{figure*}

Next, we consider the distributed MIMO radar detection by the $L$ BSs via reusing both the communication signals $\{\bar s_{l,i}(t)\}$ and the dedicated sensing signals $\{{\bf {\bar s}}_l^r(t)\}$ concurrently. Let $(x_{l}, y_{l})$ denote the location of each BS $l\in\cal{L}$. Suppose that there is one target present at location $(x_0,y_0)$, for which the target angle with respect to BS $l$ is denoted by $\theta_l$. Let ${{\bf{a}}_{t,l}}\left( {{\theta _l}} \right) \in \mathbb{C}^ {N_t\times 1} $ and ${\bf{a}}_{r,l}\left( {{\theta _l}} \right) \in \mathbb{C}^ {N_r\times 1} $ denote the transmit and receive steering vectors at BS $l\in\mathcal L$, respectively, where ${{\left\| {{{\bf{a}}_{t,l}}({\theta _l})} \right\|} \mathord{\left/
 {\vphantom {{\left\| {{{\bf{a}}_{t,l}}({\theta _l})} \right\|} {{N_t}}}} \right.
 \kern-\nulldelimiterspace} {{N_t}}} = {{\left\| {{{\bf{a}}_{r,l}}({\theta _l})} \right\|} \mathord{\left/
 {\vphantom {{\left\| {{{\bf{a}}_{r,l}}({\theta _l})} \right\|} {{N_r}}}} \right.
 \kern-\nulldelimiterspace} {{N_r}}} = 1$ is assumed without loss of generality \cite{hua2023optimal}. In the practical case with uniform linear arrays (ULAs) deployed at the BSs, we have
\begin{align}
 {{\bf{a}}_{t,l}}\left( {{\theta _l}} \right) = {\left[ {1,{e^{j2\pi \frac{{{d_a}}}{\lambda }\sin \left( {{\theta _l}} \right)}}, \ldots ,{e^{j2\pi \frac{{{d_a}}}{\lambda }\left( {N_t - 1} \right)\sin \left( {{\theta _l}} \right)}}} \right]^T},
\end{align}
and
\begin{align}
{{\bf{a}}_{r,l}}({\theta _l}){\rm{  = }}{\left[ {1,{e^{j2\pi \frac{{{d_a}}}{\lambda }\sin \left( {{\theta _l}} \right)}}, \ldots ,{e^{j2\pi \frac{{{d_a}}}{\lambda }\left( {{N_r} - 1} \right)\sin \left( {{\theta _l}} \right)}}} \right]^T},
\end{align}
where $j =\sqrt{-1}$, while parameters $d_a$ and $\lambda$ denote the antenna spacing and wavelength, respectively. Let ${{\bf{H}}_{m,l}} = {{ \hat\zeta }_{m,l}}{\bf{a}}_{r,m}\left( {{\theta _m}} \right){{\bf{a}}_{t,l}^T}\left( {{\theta _l}} \right) \in \mathbb{C}^ {N_r\times N_t} $ denote the end-to-end target response matrix from BS $l$-to-the-target-to-BS $m$, in which ${{\hat \zeta }_{m,l}} = \sqrt {{\beta _{m,l}}} {{\zeta} _{m,l}}$ is the reflection coefficient incorporating both the RCS ${{ \zeta }_{m,l}}$ and the equivalent  round-trip path loss ${\beta _{m,l}}$. Specifically, we assume that ${\beta _{m,l}} = \kappa^2 { {\frac{{d_{\rm ref}^4}}{{{d_{m}^2}{d_{l}^2}}}} }$, where $\kappa$ denotes the path loss at the reference distance $d_{\rm ref}$ and ${d_{l}} = \sqrt {({x_l} - {x_0}){}^2 + ({y_l} - {y_0}){}^2} $ denotes the distance between the target and BS $m$. As such, the received echo signal at BS $m$ is  
\begin{align}
{\bf{r}}_m\left( t \right) &= \sum\limits_{l \in {\cal L}} {\bf{H}}_{m,l}\left( \sum\limits_{i \in {\cal K}_l} {\bf{w}}_{l,i}{\bar s}_{l,i}\left( t - \tau _{m,l} \right)  + { \bf{\bar s}}_l^r\left( t - \tau _{m,l} \right) \right) \nonumber \\  &+ {\bf{\bar z}}_m\left( t \right), \label{recsig}
\end{align}
where ${\bf{\bar z}}_m(t) \sim  {\cal{CN}} \left( {{\bf 0},\sigma ^2_d{\bf{I}}} \right)$ denotes the noise at the receiver of BS $m$ and ${\tau_{m,l}} = \frac{1}{c}({d_{m}} + {d_{l}})$ denotes the transmission delay from BS $l$-to-the-target-to-BS $m$, with $c$ denoting the speed of light. Without loss of generality, we assume that each information signal and dedicated sensing signal have a normalized power over block $\mathcal T$, i.e., $\frac{1}{T}\int_\mathcal T {{{\left| {\bar s_{l,i}}\left( t \right) \right|}^2}dt}  = 1$ and $\frac{1}{T}\int_{\cal T} {{{\left| {\bar s_{l,k}^r\left( t \right)} \right|}^2}dt}  = 1$. Furthermore, notice that $\{\bar s_{l,i}(t)\}$ and $\{\bar s_{l,k}^r(t)\}$ are with zero mean and independent over different CUs and different times. As $T$ is sufficiently large, we have $ \frac{1}{T}\int_\mathcal T {\bar s_{l,i}\left( t \right)\bar s_{m,k}^*\left( {t - \tau } \right)dt}= 0, \forall \tau, l\neq m, k\neq i$, and $ \frac{1}{T}\int_{\cal T} {\bar s_{l,k}^r\left( t \right)\bar s_{m,i}^{r*}\left( {t - \tau } \right)dt}  = 0, \forall \tau, l\neq m, k\neq i$, as well as $\frac{1}{T} \int_\mathcal T \bar s_{l,i}(t) \bar s_{l,i}^* (t-\tau) dt = 0$ and  $ \frac{1}{T}\int_{\cal T} {\bar s_{l,i}^r} (t)\bar s_{l,i}^{r*}(t - \tau )dt = 0 $ for any $l, i$ and $|\tau| \ge T_{\mathrm s}$. Based on the received signals $\{{\bf r}_m(t)\}$ in (\ref{recsig}), the $L$ BSs jointly detect the existence of the target, as will be illustrated in the next section.

\section{Detection Probability at Given Target Location }

In this section, we derive the detection probability and the false alarm probability at a given target location $(x_0, y_0)$, by particularly considering the two joint detection scenarios with and without time synchronization among the BSs, namely Scenario-I and Scenario-II, respectively.

{\bf Scenario \uppercase\expandafter{\romannumeral1}}: All the BSs are synchronized in time such that the mutual delays $\tau_{m,l}$ are known. As such, the target-reflected signals over both the direct links (i.e., ${\bf{H}}_{m,m}( \sum\limits_{i \in {\cal K}_m} {\bf{w}}_{m,i}{\bar s}_{m,i}( {t - {\tau _{m,m}}} )  + {\bf{\bar s}}_m^r( t - \tau _{m,m} ) )$ from each BS $m$-to-the-target-to-originated BS) and the cross links (i.e., ${\bf{H}}_{m,l}( \sum\limits_{i \in {\cal K}_l} {{\bf{w}}_{l,i}}{\bar s}_{l,i}( t - \tau _{m,l} )  + {\bf{\bar s}}_l^r( t - \tau _{m,l} ) )$ from other BSs $l$'s-to-the-target-to-BS $m$, $\forall l\neq m$) can be exploited for joint detection. Towards this end, each BS $m$ performs the matched filtering (MF) processing based on ${\bf{r}}_m(t)$ by exploiting $\{\bar s_{l,i}(t)\}$, $\{\bar s_{l,k}^r(t)\}$, and delay $\{\tau_{m,l}\}$. Accordingly, the processed signal based on $\bar{s}_{l,i}(t)$ is
\begin{align}
{\bf d}_{m,l,i} &=\frac{1}{T}\int_\mathcal T {{{\bf{r}}_m}\left( t \right)\bar s_{l,i}^*\left( {t - {\tau _{m,l}}} \right)dt} \nonumber \\ &= \underbrace {\frac{1}{T}\int_\mathcal T {{{\bf{H}}_{m,l}}{{\bf{w}}_{l,i}}{{\left| {\bar s_{l,i}^*\left( {t - {\tau _{m,l}}} \right)} \right|}^2}dt}  }_{\rm desired~signal} \nonumber \\ &+ \underbrace { \frac{1}{T}\int_\mathcal T {{{\bf{z}}_{m}}\left( t \right)\bar s_{l,i}^*\left( {t - {\tau _{m,l}}} \right)dt}  }_{\rm filtered ~noise} \nonumber \\
& = {\bf{H}}_{m,l}{\bf{w}}_{l,i} + {\hat {\bf z}}_{m,l,i}.
\label{mod:1}
\end{align}
Similarly, the processed signal based on $\bar s_{l,k}^r(t)$ is
\begin{align}
{\bf d}_{m,l,k}^r &=\frac{1}{T}\int_\mathcal T {\bf{r}}_m\left( t \right)\bar s_{l,k}^{r*}\left( t - {\tau _{m,l}} \right)dt \nonumber \\ &= \underbrace {\frac{1}{T}\int_\mathcal T {\bf{H}}_{m,l}{\bf{w}}_{l,k}^r{{\left| {\bar s_{l,k}^{r*}\left( {t - {\tau _{m,l}}} \right)} \right|}^2}dt}_{\rm desired~signal} \nonumber \\ &+ \underbrace { \frac{1}{T}\int_\mathcal T {\bf{z}}_m\left( t \right)\bar s_{l,k}^{r*}\left( t - \tau _{m,l} \right)dt  }_{\rm filtered ~noise}\nonumber \\
&= {\bf{H}}_{m,l}{\bf{w}}_{l,k}^r + {\hat {\bf z}}_{m,l,k}^r.
\label{mod:2}
\end{align}
In \eqref{mod:1} and \eqref{mod:2}, ${\hat {\bf z}_{m,l,i}} \sim {\cal CN}\left({{\bf 0},{\sigma ^2_d}{\bf I}} \right)$ and ${\hat {\bf z}_{m,l,k}^r} \sim {\cal CN}\left({{\bf 0},{\sigma ^2_d}{\bf I}} \right)$ denote the equivalent noise after MF processing, where $\sigma_d^2$ is the per-antenna noise power at each BS. After obtaining $\{{\bf d}_{m,l,i}\}_{l\in\mathcal{L}}$ and $\{{\bf d}_{m,l,k}^r\}_{l\in\mathcal{L}}$, each BS $m$ delivers them to the CC, which then performs the joint radar detection based on $\{{\bf d}_{m,l,i}\}$ and $\{{\bf d}_{m,l,k}^r\}$. For ease of illustration, the observed signals can be stacked as \eqref{stack:d_I}.
\begin{figure*}
\begin{align}
{{\bf{d}}_{\rm{I}}} = {\left[ {{\bf{d}}_{1,1,1}^T, \ldots ,{\bf{d}}_{1,1,K}^T,{\bf{d}}_{1,2,1}^T, \ldots ,{\bf{d}}_{L,L,K}^T,{\bf{d}}_{1,1,1}^{rT}, \ldots ,{\bf{d}}_{1,1,{r_t}}^{rT},{\bf{d}}_{1,2,1}^{rT}, \ldots ,{\bf{d}}_{L,L,{r_t}}^{rT}} \right]^T} \in {{\mathbb C}^{{N_r(K+ r_t)L^2}}}. \label{stack:d_I}
\end{align}
\end{figure*}

{\bf Scenario \uppercase\expandafter{\romannumeral2}}: In this scenario, the BSs are not synchronized in time and thus the value of transmission delay $\tau _{m,l}$ with respect to other BS $l\neq m$ is not available at BS $m$. In this scenario, the BSs can only utilize the target-reflected signals over their direct links, i.e., (${\bf{H}}_{m,m}( \sum\limits_{i \in {\cal K}_m} {\bf{w}}_{m,i}{\bar s}_{m,i}( {t - {\tau _{m,m}}} )  + {\bf{s}}_m^r( t - \tau _{m,m} ) )$, $\forall m\in\mathcal L$), for joint detection. After the MF processing similarly as in Scenario I, we have the processed signal as  \eqref{stack:d_II},
\begin{figure*}
\begin{align}
{\bf{d}}_{\rm II} = {\left[ {{\bf{d}}_{1,1,1}^T, \ldots ,{\bf{d}}_{1,1,K}^T,{\bf{d}}_{2,2,1}^T, \ldots ,{\bf{d}}_{L,L,K}^T,{\bf{d}}_{1,1,1}^{rT}, \ldots ,{\bf{d}}_{1,1,{r_t}}^{rT},{\bf{d}}_{2,2,1}^{rT}, \ldots ,{\bf{d}}_{L,L,{r_t}}^{rT}} \right]^T} \in {{\mathbb C}^{{N_r(K+ r_t)L}}}. \label{stack:d_II}
\end{align}
\hrulefill
\end{figure*}
where $\{{\bf d}_{m,l,i}\}$ and $\{{\bf d}_{m,l,k}^r\}$ are defined as \eqref{mod:1} and \eqref{mod:2}, respectively.

\subsection{Detection Probability in Scenario \uppercase\expandafter{\romannumeral1} with BSs Synchronization}\label{Sec:III-A}

To start with, we define two hypotheses for target detection, i.e., ${\cal H}_1$ when the target exists and ${\cal H}_0$ when the target does not exist. For notational simplicity, we define ${\boldsymbol \alpha} _{m,l,i}^c = {\bf{H}}_{m,l}{\bf{w}}_{l,i}$ and ${\boldsymbol \alpha} _{m,l,k}^r = {\bf{H}}_{m,l}{\bf{w}}_{l,k}^r$ as the reflected communication signal and dedicated sensing signal vectors from BS $l$-to-the target-to-BS $m$ when the target exists. Also, we present the correspondingly accumulated signal vector as \eqref{alpha:I}.
\begin{figure*}
\begin{align}
{\boldsymbol{\alpha }}_{\rm I} = {\left[ {{\boldsymbol{\alpha }}_{1,1,1}^{cT}, \ldots, {\boldsymbol{\alpha }}_{1,1,K}^{cT},{\boldsymbol{\alpha }}_{1,2,1}^{cT}, \ldots ,{\boldsymbol{\alpha }}_{L,L,K}^{cT},{\boldsymbol{\alpha }}_{1,1,1}^{rT}, \ldots ,{\boldsymbol{\alpha }}_{1,1,{r_t}}^{rT},{\boldsymbol{\alpha }}_{1,2,1}^{rT}, \ldots ,{\boldsymbol{\alpha }}_{L,L,{r_t}}^{rT}} \right]^T} \in {{\mathbb C}^{{N_r(K+r_t)L^2}}}. \label{alpha:I}
\end{align}
\end{figure*}
Furthermore, define the noise vector in Scenario I as \eqref{noise:I}.
\begin{figure*}
\begin{align}
{\hat{\bf z}}_{\rm I} = {\left[ {{\hat{\bf z}}_{1,1,1}^{T}, \ldots ,{\hat{\bf z}}_{1,1,K}^{T},{\hat{\bf z}}_{1,2,1}^{T}, \ldots ,{\hat{\bf z}}_{L,L,K}^{T},{\hat{\bf z}}_{1,1,1}^{rT}, \ldots ,{\hat{\bf z}}_{1,1,{r_t}}^{rT},{\hat{\bf z}}_{1,2,1}^{rT}, \ldots ,{\hat{\bf z}}_{L,L,{r_t}}^{rT}} \right]^T} \in {{\mathbb C}^{{N_r(K+r_t)L^2}}}. \label{noise:I}
\end{align}
\hrulefill
\end{figure*}
Then, based on \eqref{mod:1}, we have the processed signals after the MF processing as
\begin{align}\label{eqn:H_1:H_0}
\left\{ \begin{array}{l}
{{\cal H}_1}:{{\bf{d}}_{\rm{I}}} = {{\boldsymbol{\alpha }}_{\rm{I}}} + {\hat {\bf{z}}_{\rm{I}}},\\
{{\cal H}_0}:{{\bf{d}}_{\rm{I}}} = {\hat {\bf{z}}_{\rm{I}}}.
\end{array} \right.
\end{align}



Next, we adopt the likelihood ratio test for target detection. Based on \eqref{eqn:H_1:H_0}, the likelihood functions of vector ${{\bf{d}}}_{\rm I}$ under the hypothesis ${{\cal H}_1}$ and ${{\cal H}_0}$ are respectively given by
\begin{align}
{p\left( {{\bf{d}}_{\rm I}|{{\cal H}_1}} \right)} &\!= \! {c_0\exp \left( { \!- \frac{1}{{{\sigma ^2_d}}}{{\left( {{\bf{d}}_{\rm I} \!-\! {\boldsymbol{\alpha }}_{\rm I}} \right)}^H}\!\left( {{\bf{d}}_{\rm I} \!-\! {\boldsymbol{\alpha }_{\rm I}}} \right)} \right)}, \\
{p\left( {{\bf{d}_{\rm I}}|{{\cal H}_0}} \right)} & = {c_0\exp \left( { - \frac{1}{{{\sigma ^2_d}}}{{\bf{d }}_{\rm I}^H}{\bf{d }}_{\rm I}} \right)},
\end{align}
where $c_0  =  \frac{1}{{{\pi ^{{N_r(K+r_t)L^2}}}{\sigma ^{2{N_r(K+r_t)L^2}}_d}}}$.
Accordingly, the Neyman-Pearson (NP) detector is given by the likelihood ratio test\cite{sssssss}:
\begin{align}
\ln  \frac{{p\left( {{\bf{d}}_{\rm I}|{{\cal H}_1}} \right)}}{{p\left( {{\bf{d}}_{\rm I}|{{\cal H}_0}} \right)}} 
= \frac{1}{{{\sigma_d ^2}}}\left( {2{\rm{Re}}\left( {{{\boldsymbol{\alpha }}_{\rm I}^H}{\bf{d}}}_{\rm I} \right) - {{\boldsymbol{\alpha }}_{\rm I}^H}{\boldsymbol{\alpha }}_{\rm I}} \right) \mathop \gtrless\limits_{{\cal H}_0}^{{\cal H}_1}  \delta,\label{mod:2:v2}
\end{align}
where $\delta$ denotes the threshold determined by the tolerable level of false alarm. Notice that since ${{\boldsymbol{\alpha }}_{\rm I}^H}{\boldsymbol{\alpha }}_{\rm I}$ is given, the detector in \eqref{mod:2:v2} can be equivalently simplified as
\begin{align}\label{detector}
T\left( {\bf{d}}_{\rm I} \right) = {\rm{Re}}\left( {{{\boldsymbol{\alpha }}_{\rm I}^H}{\bf{d}}_{\rm I}} \right) \mathop \gtrless\limits_{{\cal H}_0}^{{\cal H}_1} {\delta}^{'},
\end{align}
where $\delta^{'}$ denotes the threshold related to $T\left( {\bf{d}} _{\rm I}\right)$.







Then, we derive the distribution of $T\left( {\bf{d}}_{\rm I} \right)$. Towards this end, we consider $x={{{\boldsymbol{\alpha }}_{\rm I}^H}{\bf{d}}_{\rm I}}$, 
whose expectation and variance under hypothesis ${{\cal H}_1}$ and ${{\cal H}_0}$ are respectively obtained as
\begin{align}
{\mathbb{E} }\left( {x|{{\cal H}_0}} \right) 
& = 0, \label{exph0} \\
{\mathbb{E} }\left( {x|{\cal H}_1} \right) & = {\cal E} _{\rm I} \triangleq \sum\limits_{l \in {\cal L}} \sum\limits_{m \in {\cal L}} ( \sum\limits_{i \in {\cal K}_l} {\left\| {\bf{H}}_{m,l}{\bf{w}}_{l,i} \right\|}^2 \nonumber \\  &+ \sum\limits_{k = 1}^{r_t} {\left\| {\bf{H}}_{m,l}{\bf{w}}_{l,k}^r \right\|}^2  )   \nonumber\\ &=N_r\sum\limits_{l \in {\cal L}} \sum\limits_{m \in {\cal L}} \zeta_{m,l}^2{\beta _{m,l}}( \sum\limits_{i \in {\cal K}_l} {\left| {\bf{a}}_{t,l}^T\left( {{\theta _l}} \right){\bf{w}}_{l,i} \right|}^2 \nonumber \\  &+ {\bf{a}}_{t,l}^T\left( {{\theta _l}} \right){\bf{R}}_l^r{{\bf{a}}_{t,l}}\left( {{\theta _l}} \right)  )  , \label{exph1} \\
{\mathop{\rm var}} \left( {x|{\cal H}_0} \right) &= {\mathop{\rm var}} \left( {x|{\cal H}_1} \right)= {\sigma_d ^2} {\cal E}_{\rm I}. \label{varh0}
\end{align}
By combining (\ref{exph0}), (\ref{exph1}), and (\ref{varh0}), we have
\begin{align}
 \left\{ {\begin{array}{*{20}{c}}
x \sim{{\cal CN}\left( {0,{\sigma_d ^2}{\cal E}_{\rm I} } \right),{{\cal H}_0}},\\
x \sim{{\cal CN}\left( {{\cal E}_{\rm I},{\sigma_d ^2}{\cal E}_{\rm I} } \right),{{\cal H}_1}}.
\end{array}} \right.
\end{align}
As a result, for $T\left( {\bf{d}}_{\rm I} \right) = {\mathop{\rm Re}\nolimits} \left( x\right)$, it follows that
\begin{align}\label{distribution}
\left\{ {\begin{array}{*{20}{c}}
T\left( {\bf{d}}_{\rm I} \right) \sim {{\cal N}\left( {0,{{{{\sigma ^2_d}{\cal E}_{\rm I} } \mathord{\left/
 {\vphantom {{{\sigma ^2_d}} 2}} \right.
 \kern-\nulldelimiterspace} 2}}} \right),{{\cal H}_0}},\\
T\left( {\bf{d}}_{\rm I} \right) \sim {{\cal N}\left( {{\cal E}_{\rm I} ,{{{{\sigma ^2_d}{\cal E}_{\rm I} } \mathord{\left/
 {\vphantom {{{\sigma ^2_d}} 2}} \right.
 \kern-\nulldelimiterspace} 2}}} \right),{{\cal H}_1}}.
\end{array}} \right.
\end{align}

Finally, we derive the detection probability under a required false alarm probability. Based on \eqref{detector} and \eqref{distribution}, we obtain the detection probability $p_{\mathrm D}^{\rm I}$ and the false alarm probability ${p^{\rm I}_{\mathrm {FA}}}$ with respect to the detector threshold $\delta^{'}$ as
\begin{align}
{p^{\rm I}_{\mathrm D}} &= Q\left( {(\delta^{'}-{\cal E}_{\rm I} ) \sqrt {\frac{2}{{\sigma _d^2{\cal E}_{\rm I} }}} } \right),\label{pD1} \\
{p^{\rm I}_{\mathrm {FA}}} &= Q\left( {\delta^{'} \sqrt {\frac{2}{{\sigma _d^2{\cal E}_{\rm I} }}} } \right),\label{pFA1}
\end{align}
respectively.
Based on \eqref{pFA1}, we have ${\delta^{'} \sqrt {\frac{2}{{\sigma _d^2{\cal E}_{\rm I} }}} } ={Q^{ - 1}}\left( {{p^{\rm I}_{\mathrm {FA}}}} \right)$. By substituting this into \eqref{pD1}, we obtain the detection probability for given false alarm probability $p_{\mathrm {FA}}^{\rm I}$ as
\begin{align}
{p_{\mathrm D}^{\rm I}} =  Q\left( {{Q^{ - 1}}\left( {{p_{\mathrm {FA}}^{\rm I}}} \right) - \sqrt {\frac{{2{\cal E}_{\rm I} }}{{\sigma _d^2}}} } \right). \label{pdI}
\end{align}
It is observed that the detection probability ${p_{\mathrm D}^{\rm I}}$ in \eqref{pdI} is monotonically increasing with respect to ${\cal E}_{\rm I}$ in \eqref{exph1},
which corresponds to the total received reflection-signal power over both direct and cross reflection links. Denote ${\bf{A}}_l({\theta _l}) = {\bf{a}}_{t,l}^ * ({\theta _l}){\bf{a}}_{t,l}^T({\theta _l}),\forall l\in\mathcal{L}$, ${\cal E}_{\rm I}$ is reexpressed  as
\begin{align}
{\cal E}_{\rm{I}} &= N_r\sum\limits_{l \in {\cal L}} \sum\limits_{m \in {\cal L}} \zeta_{m,l}^2{\beta _{m,l}}( \sum\limits_{i \in {\cal K}_l} {\rm tr}\left( {\bf{w}}_{l,i}{\bf{w}}_{l,i}^H{{\bf A}_l}\left( \theta _l \right) \right) \nonumber \\  &+ {\rm tr}\left( {\bf{R}}_l^r{{\bf A}_l}\left( \theta _l \right) \right) ). \label{energyI}
\end{align}

As a result, maximizing the detection performance of the system is equivalent to maximizing the received reflection-signal power ${\cal E}_{\rm{I}}$ in \eqref{energyI}.

\subsection{Detection Probability in Scenario \uppercase\expandafter{\romannumeral2} without BSs Synchronization}\label{Sec:III-B}
Next, we consider Scenario II without synchronization among the BSs. The detection probability in this scenario can be similarly derived as that in Scenario I, by replacing ${\bf{d}}_{\rm I}$ by ${\bf{d}}_{\rm II}$ and accordingly replacing ${\cal E}_{\rm I}$ in \eqref{exph1} by
\begin{align}
{\cal E}_{\rm II} &= N_r \sum\limits_{m \in {\cal L}} \zeta_{m,m}^2{\beta _{m,m}}( \sum\limits_{i \in {\cal K}_m} {{{\left| {\bf{a}}_{t,m}^T\left( \theta _m \right){\bf{w}}_{m,i} \right|}^2}} \nonumber \\ &+ {\bf{a}}_{t,m}^T\left( {{\theta _l}} \right){\bf{R}}_m^r{{\bf{a}}_{t,m}}\left( {{\theta _m}} \right)  )\nonumber \\
&= {N_r}\sum\limits_{m \in {\cal L}} \zeta_{m,m}^2{\beta _{m,m}}( \sum\limits_{i \in {\cal K}_m} {{\rm{tr}}} \left( {{{\bf{w}}_{m,i}}{\bf{w}}_{m,i}^H{{\bf{A}}_m}\left( {\theta _m} \right)} \right) \nonumber \\ &+ {\rm{tr}}\left( {{\bf{R}}_m^r{{\bf{A}}_m}\left( {\theta _m} \right)} \right) ), \label{energyII}
\end{align}
where ${{\bf{A}}_m}\left( {\theta _m} \right) = {\bf{a}}_{t,m}^*(\theta _m){\bf{a}}_{t,m}^T(\theta _m),\forall m \in {\cal L}$.

Based on the similar derivation procedure as in Section \ref{Sec:III-A}, we have the detection probability ${p_{\mathrm D}^{\rm II}}$ for a given false alarm probability ${p_{\mathrm {FA}}^{\rm II}}$ as
\begin{align}
{p_{\mathrm D}^{\rm II}} = Q\left( {{Q^{ - 1}}\left( {p_{\mathrm {FA}}^{\rm II}} \right) - \sqrt {\frac{{2{\cal {E}_{\rm II}}}}{{\sigma _d^2}}} } \right). \label{pdII}
\end{align}

It is observed from \eqref{pdII} that the detection probability ${p_{\mathrm D}^{\rm II}}$ is monotonically increasing with respect to the total received reflection-signal power over the direct links only, i.e.,  ${\cal E}_{\rm II}$ in \eqref{energyII}. Therefore, maximizing ${p_{\mathrm D}^{\rm II}}$ is equivalent to maximizing the received reflection-signal power ${\cal E}_{\rm II}$ in \eqref{pdII}.

\section{SINR-constrained Detection Probability Maximization via Coordinated Transmit Beamforming}
In this section, we design the coordinated transmit beamforming $\{{\bf{w}}_{l,k}\}$ and dedicated sensing signal covariance matrix $\{{\bf R}_{l}^r\}$ to maximize the minimum detection probability with a given false alarm probability $p_{\mathrm {FA}}$ over a particular targeted area, subject to the minimum SINR requirement $\Gamma_{m,k}$ at each cell $m\in \cal L$ and CU $k\in {\cal K}_m$, and the maximum power constraint $P_{\max}$ at each BS. In particular, let $\cal Q$ denote the targeted area for detection. To facilitate the design, we select $Q$ sample locations from $\cal Q$, denoted by $(x_0^{(q)}, y_0^{(q)}), \forall q\in {\cal Q} \triangleq \{1,\ldots, Q\}$. For a potential target located at $(x_0^{(q)}, y_0^{(q)})$, we denote the target angle with respect to BS $l\in\mathcal L$ as $\theta_l^{(q)}$ and the round-trip path loss from BS $l\in\mathcal L$ to target to BS $m\in\mathcal L$ as ${\beta _{m,l}^{(q)}}$.


\subsection{Scenario \uppercase\expandafter{\romannumeral1} with BSs Synchronization}

First, we consider Scenario I with BSs synchronization. In this scenario, for given target location $q$, maximizing the detection probability in this scenario is equivalent to maximizing ${\cal E}_{\rm I}$ in \eqref{energyI}. Based on this observation, the SINR-constrained minimum detection probability  maximization problems over the given targeted area with Type-I and Type-II receivers are formulated as
\begin{subequations}
\begin{align}
\left( {\rm{P1}} \right): \mathop {\max }\limits_{\left\{ {\bf{w}}_{l,i}, {{\bf{R}}_l^r} \right\}} ~ &~ \mathop {\min }\limits_{q\in \cal{Q}}  ~{f^{\rm{I}}}\left( {\{ {{\bf{w}}_{l,i}}\} ,\{ {\bf{R}}_l^r\} } \right)    \label{prob:1} \\
~{\rm{s.t.}}&~~\gamma _{m,k}^{\rm I}\left( {\left\{ {{{\bf{w}}_{l,i}}} \right\},\left\{ {{\bf{R}}_l^r} \right\}} \right)\ge {\Gamma _{m,k}}, \nonumber \\  & ~~\forall k \in {\cal K}_m, \forall m \in {\cal L}, \label{prob:con:1} \\
&~~\sum\limits_{i \in {\cal K}_l} {\left\| {{\bf{w}}_{l,i}} \right\|}^2 + {\rm Tr} ({{\bf R}_{l}^r})  \le P_{\max },\forall l \in {\cal L}, \label{prob:con:2}\\
&~~{\bf{R}}_l^r \succeq {\bf 0}, \forall l \in {\cal L},\label{prob:con:3}
\end{align}
\end{subequations}
and
\begin{subequations}
\begin{align}
\left( {\rm{P2}} \right): \mathop {\max }\limits_{\left\{ {\bf{w}}_{l,i}, {{\bf{R}}_l^r} \right\}} ~ &~ \mathop {\min }\limits_{q\in \cal{Q}}  ~{f^{\rm{I}}}\left( {\{ {{\bf{w}}_{l,i}}\} ,\{ {\bf{R}}_l^r\} } \right)   \label{prob:2} \\
~{\rm{s.t.}}&~~ \gamma _{m,k}^{\rm II}\left( {\left\{ {{{\bf{w}}_{l,i}}} \right\},\left\{ {{\bf{R}}_l^r} \right\}} \right)\ge {\Gamma _{m,k}}, \nonumber \\ & ~~\forall k \in {\cal K}_m, \forall m\in {\cal L}, \label{prob22:con:1} \\
&~~\eqref{prob:con:2}~{\text {and}}~\eqref{prob:con:3},\nonumber
\end{align}
\end{subequations}
respectively, where
\begin{align}
{f^{\rm{I}}}\left( {\{ {{\bf{w}}_{l,i}}\} ,\{ {\bf{R}}_l^r\} } \right) &= \sum\limits_{l \in {\cal L}} \sum\limits_{m \in {\cal L}} \zeta _{m,l}^2\beta _{m,l}^{(q)}{\rm{tr}}( ( \sum\limits_{i \in {\cal K}_l} {{\bf{w}}_{l,i}} {\bf{w}}_{l,i}^H \nonumber \\&+ {\bf{R}}_l^r ){{\bf{A}}_l}( {\theta _l^{(q)}} ) ) ,
\end{align}
\eqref{prob:con:1} and \eqref{prob22:con:1} denote the minimum SINR constraints at different types of CUs and \eqref{prob:con:2} denotes the maximum transmit power constraints at the BSs. Notice that problems (P1) and (P2) are non-convex due to the non-convex constraints in \eqref{prob:con:1} and \eqref{prob22:con:1}. In the following, we apply the SDR technique to solve problems (P1) and (P2).


Towards this end, we introduce $\Omega $ as an auxiliary optimization variable and define ${{\bf{W}}_{l,i}} ={\bf{w}}_{l,i}{\bf{w}}_{l,i}^H \succeq \bf{0}$ with ${\rm {rank}} ({\bf{W}}_{l,i})\le 1$, $\forall l \in {\cal L}, i \in {\cal K}_l$. Problems (P1) and (P2) are equivalently reformulated as
\begin{subequations}
\begin{align}
\left( {\rm{P1.1}} \right):&\mathop {\max }\limits_{\left\{ {{{\bf{W}}_{l,i}}\succeq \bf{0}},{\bf{R}}_l^r \succeq {\bf{0}},\Omega \right\}} ~\Omega    \label{prob:1.1} \\
&~~~~{\rm{s.t.}}~ {{\hat f}^{\rm{I}}}\left( {\{ {{\bf{W}}_{l,i}}\} ,\{ {\bf{R}}_l^r\} } \right) \ge \Omega, \forall q \in {\cal Q}, \label{prob1.1:t} \\
&~~~~~~~~\sum\limits_{l \in {\cal L}} {\sum\limits_{i \in {\cal K}_l} {{\rm{tr}}\left( {{{\bf{h}}_{l,m,k}}{\bf{h}}_{l,m,k}^H{{\bf{W}}_{l,i}}} \right)} } \nonumber \\
&~~~~~~~~+ \sum\limits_{l \in {\cal L}} {\rm tr}\left( {{{\bf{h}}_{l,m,k}}{\bf{h}}_{l,m,k}^H{\bf{R}}_l^r} \right)\nonumber \\
&~~~~~~~~+  \sigma _c^2  \le ( {1 + \frac{1}{{{\Gamma _{m,k}}}}} ){\rm{tr}}\left( {{{\bf{h}}_{m,m,k}}{\bf{h}}_{m,m,k}^H{{\bf{W}}_{m,k}}} \right), \nonumber \\
&~~~~~~~~~~\forall k \in {\cal K}_m, \forall m\in {\cal L}, \label{prob1.1:con:1} \\
&~~~~~~~~\sum\limits_{i \in {\cal K}_l} {\rm tr}\left( {\bf{W}}_{l,i}\right)  + {\rm tr} ({{\bf R}_{l}^r})  \le P_{\max },\forall l \in {\cal L}, \label{prob1.1:con:2}\\
&~~~~~~~~{\rm {rank}} ({\bf{W}}_{l,i}) \le 1, \forall i \in {\cal K}_l, \forall l \in {\cal L} \label{prob1.1:con:3}
\end{align}
\end{subequations}
and
\begin{subequations}
\begin{align}
\left( {\rm{P2.1}} \right):&\mathop {\max }\limits_{\left\{ {{{\bf{W}}_{l,i}}\succeq \bf{0}},{\bf{R}}_l^r \succeq {\bf{0}},\Omega \right\}} ~\Omega   \\
&~~~~{\rm{s.t.}}~ \sum\limits_{l \in {\cal L}} {\sum\limits_{i \in {\cal K}_l} {{\rm{tr}}\left( {{{\bf{h}}_{l,m,k}}{\bf{h}}_{l,m,k}^H{{\bf{W}}_{l,i}}} \right)} } +  \sigma _c^2 \nonumber \\
&~~~~~~~~\le \left( {1 + \frac{1}{{{\Gamma _{m,k}}}}} \right){\rm{tr}}\left( {{{\bf{h}}_{m,m,k}}{\bf{h}}_{m,m,k}^H{{\bf{W}}_{m,k}}} \right), \nonumber \\
&~~~~~~~~~~\forall k \in {\cal K}_m, \forall  m \in {\cal L}, \label{prob22.1:con:2}  \\
&~~~~~~~~~~\eqref{prob1.1:t},~\eqref{prob1.1:con:2},~\text{and}~\eqref{prob1.1:con:3}, \nonumber
\end{align}
\end{subequations}
respectively, where
\begin{align}
{{\hat f}^{\rm{I}}}\left( {\{ {{\bf{W}}_{l,i}}\} ,\{ {\bf{R}}_l^r\} } \right) &= \sum\limits_{l \in {\cal L}} \sum\limits_{m \in {\cal L}} \zeta _{m,l}^2\beta _{m,l}^{(q)}{\rm{tr}}( ( \sum\limits_{i \in {\cal K}_l} {{{\bf{W}}_{l,i}}} \nonumber \\ & + {\bf{R}}_l^r ){{\bf{A}}_l}( {\theta _l^{(q)}} ) )
\end{align}

However, problems (P1.1) and (P2.1) are still non-convex due to the rank-one constraints in \eqref{prob1.1:con:3}. To tackle this issue, we drop these rank-one constraints and obtain the SDR version of (P1.1) and (P2.1) as (SDR1.1) and (SDR2.1) \cite{luo2010semidefinite} respectively, both of them are convex and can be optimally solved by standard convex optimization program solvers such as CVX \cite{grant2014cvx}. Let $\{ \{ {\bf{W}}_{l,i}^ * \} ,\{ {\bf{R}}_l^{r * }\}, \Omega  ^ *\}$ and $\{ \{ {\bf{W}}_{l,i}^{**}\} ,\{ {\bf{R}}_l^{r**}\} ,{\Omega ^{**}}\} $ denote the optimal solutions to (SDR1.1) and (SDR2.1), respectively. Notice that the obtained ${\bf{W}}_{l,i}^*$  and ${\bf{W}}_{l,i}^{**}$ are generally of high ranks, which do not necessarily satisfy the rank-one constraints in (P1.1) and (P2.1). As such, we introduce the following additional step to construct the equivalent optimal rank-one solutions to (P1.1) and (P2.1).

\begin{prop}
The SDR of problem (P1.1) is tight. In particular, based on the optimal solution of $\{ \{ {\bf{W}}_{l,i}^ * \} ,\{ {\bf{R}}_l^{r * }\}, \Omega  ^ *\}$ to (SDR1.1), if any of $ \{ {\bf{W}}_{l,i}^ * \}$ is not rank-one, we can always construct the equivalent optimal rank-one solution of $\{ \{ {{{\bf{\tilde W}}}_{l,i}}\} ,\{ {\bf{\tilde R}}_l^r\} ,\tilde \Omega \} $ to (P1.1) according to the following, which achieves the same objective value as (SDR1.1):
\begin{subequations}
\begin{align}
{{\bf{\tilde w}}_{l,i}} &= {\left( {{\bf{h}}_{l,m,k}^H{\bf{W}}_{l,i}^ * {\bf{h}}_{l,m,k}} \right)^{-\frac{1}{2}}}{\bf{W}}_{l,i}^ * {\bf{h}}_{l,m,k}, \label{app:1:1}  \\
{{{\bf{\tilde W}}}_{l,i}} &= {{\bf{\tilde w}}_{l,i}}{\bf{\tilde w}}_{l,i}^H, \label{app:1:2} \\
{\bf{\tilde R}}_l^r &= \sum\limits_{i \in {\cal K}_l} {{\bf{W}}_{l,i}^*}   + {\bf{\tilde R}}_l^{r * } - \sum\limits_{i \in {\cal K}_l} {{\bf{\tilde W}}_{l,i}^*} \label{app:1:3}\\
\tilde \Omega &={\Omega ^*} . \label{app:1:4}
\end{align}
\end{subequations}
\end{prop}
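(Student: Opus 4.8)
The plan is to exploit a single structural invariant of the construction: the \emph{per-BS total transmit covariance} is left unchanged. Define $\mathbf{T}_l \triangleq \sum_{i\in\mathcal{K}_l}\mathbf{W}_{l,i}^*+\mathbf{R}_l^{r*}$. Substituting \eqref{app:1:3} into the new per-BS sum $\sum_{i\in\mathcal{K}_l}\tilde{\mathbf{W}}_{l,i}+\tilde{\mathbf{R}}_l^r$, the two copies of $\sum_{i}\tilde{\mathbf{W}}_{l,i}$ cancel, so that $\sum_{i\in\mathcal{K}_l}\tilde{\mathbf{W}}_{l,i}+\tilde{\mathbf{R}}_l^r=\mathbf{T}_l$ for every $l$. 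First I would record that every ingredient of (P1.1) except one depends on the beamformers only through $\{\mathbf{T}_l\}$. Indeed, the power budget \eqref{prob1.1:con:2} is ${\rm tr}(\mathbf{T}_l)\le P_{\max}$; the objective is $\hat f^{\rm I}=\sum_{l,m}\zeta_{m,l}^2\beta_{m,l}^{(q)}{\rm tr}(\mathbf{T}_l\mathbf{A}_l(\theta_l^{(q)}))$; and the interference-plus-noise side of \eqref{prob1.1:con:1} equals $\sum_{l}\mathbf{h}_{l,m,k}^H\mathbf{T}_l\mathbf{h}_{l,m,k}+\sigma_c^2$. All three are therefore identical for the tilde variables, so \eqref{prob1.1:con:2}, \eqref{prob1.1:t} (with $\tilde\Omega=\Omega^*$), and the left side of \eqref{prob1.1:con:1} are preserved.

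The only quantity not governed by $\{\mathbf{T}_l\}$ alone is the desired-signal term $(1+1/\Gamma_{m,k}){\rm tr}(\mathbf{h}_{m,m,k}\mathbf{h}_{m,m,k}^H\mathbf{W}_{m,k})$ on the right of \eqref{prob1.1:con:1}. Here I would invoke the defining identity of the construction \eqref{app:1:1}--\eqref{app:1:2}, applied with the serving channel (the channel $\mathbf{h}_{l,l,i}$ from BS $l$ to its own associated CU $i$): a direct computation gives
\begin{align}
\mathbf{h}_{m,m,k}^H\tilde{\mathbf{W}}_{m,k}\mathbf{h}_{m,m,k}=\frac{(\mathbf{h}_{m,m,k}^H\mathbf{W}_{m,k}^*\mathbf{h}_{m,m,k})^2}{\mathbf{h}_{m,m,k}^H\mathbf{W}_{m,k}^*\mathbf{h}_{m,m,k}}=\mathbf{h}_{m,m,k}^H\mathbf{W}_{m,k}^*\mathbf{h}_{m,m,k}, \nonumber
\end{align}
so the desired-signal power is unchanged and \eqref{prob1.1:con:1} holds with exactly the same slack. (The normalization in \eqref{app:1:1} is well defined because feasibility forces $\mathbf{h}_{m,m,k}^H\mathbf{W}_{m,k}^*\mathbf{h}_{m,m,k}>0$: otherwise the right side of \eqref{prob1.1:con:1} would vanish while its left side is at least $\sigma_c^2>0$.)

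The step I expect to be the crux is verifying $\tilde{\mathbf{R}}_l^r\succeq\mathbf{0}$; rank-one-ness of $\tilde{\mathbf{W}}_{l,i}$ is immediate from \eqref{app:1:2}. Since $\mathbf{R}_l^{r*}\succeq\mathbf{0}$ and $\tilde{\mathbf{R}}_l^r=\mathbf{R}_l^{r*}+\sum_{i}(\mathbf{W}_{l,i}^*-\tilde{\mathbf{W}}_{l,i})$, it suffices to establish the per-user domination $\mathbf{W}_{l,i}^*\succeq\tilde{\mathbf{W}}_{l,i}$. I would prove this Cauchy--Schwarz-type matrix inequality by factoring $\mathbf{W}_{l,i}^*=\mathbf{U}\mathbf{U}^H$ (for instance $\mathbf{U}=(\mathbf{W}_{l,i}^*)^{1/2}$) and writing, for an arbitrary test vector $\mathbf{x}$ with $\mathbf{a}=\mathbf{U}^H\mathbf{x}$ and $\mathbf{b}=\mathbf{U}^H\mathbf{h}_{l,l,i}$,
\begin{align}
\mathbf{x}^H(\mathbf{W}_{l,i}^*-\tilde{\mathbf{W}}_{l,i})\mathbf{x}=\|\mathbf{a}\|^2-\frac{|\mathbf{a}^H\mathbf{b}|^2}{\|\mathbf{b}\|^2}\ge 0, \nonumber
\end{align}
which is precisely Cauchy--Schwarz. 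Summing $\mathbf{W}_{l,i}^*-\tilde{\mathbf{W}}_{l,i}\succeq\mathbf{0}$ over $i$ and adding $\mathbf{R}_l^{r*}$ then yields $\tilde{\mathbf{R}}_l^r\succeq\mathbf{0}$.

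Collecting these facts, $\{\{\tilde{\mathbf{W}}_{l,i}\},\{\tilde{\mathbf{R}}_l^r\},\tilde\Omega\}$ is feasible for (P1.1), satisfies the rank-one constraints \eqref{prob1.1:con:3}, and attains the objective value $\Omega^*$. Because (SDR1.1) is a relaxation of (P1.1), this shows that their optimal values coincide, i.e.\ the SDR is tight, completing the argument.
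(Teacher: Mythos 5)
Your proposal is correct and follows essentially the same route as the paper's Appendix A: the construction preserves the per-BS aggregate covariance $\sum_{i}\mathbf{W}_{l,i}^*+\mathbf{R}_l^{r*}$ (hence the objective, the power budget, and the interference-plus-noise side of \eqref{prob1.1:con:1}), while \eqref{app:1:1}--\eqref{app:1:2} preserve the desired-signal power $\mathbf{h}_{m,m,k}^H\mathbf{W}_{m,k}^*\mathbf{h}_{m,m,k}$, which is exactly the chain of identities in \eqref{app:1:5} and \eqref{qos:1.1}. Your write-up is in fact more complete than the paper's: you explicitly verify that $\tilde{\mathbf{R}}_l^r\succeq\mathbf{0}$ by establishing the per-user domination $\mathbf{W}_{l,i}^*\succeq\tilde{\mathbf{W}}_{l,i}$ via a Cauchy--Schwarz argument, and you justify that the normalizing scalar in \eqref{app:1:1} is strictly positive by feasibility; both points are needed for the constructed solution to be feasible for (P1.1) but are left implicit in the paper's appendix.
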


\begin{proof}
See Appendix \ref{appendixA}.
\end{proof}

Similarly as for problem (P1.1), we find the optimal solution to (P2.1) by showing that the SDR is tight in the following proposition.

\begin{prop}
The SDR of problem (P2.1) is tight. In particular, based on the optimal solution of $\{ \{ {\bf{W}}_{l,i}^ {**} \} ,\{ {\bf{R}}_l^{r ** }\}, \Omega  ^ {**}\}$ to (SDR2.1), if any of $ \{ {\bf{W}}_{l,i}^ {**} \}$ is not rank-one, we can always construct the equivalent optimal rank-one solution of $\{ \{ {{{\bf{\bar W}}}_{l,i}}\} ,\{ {\bf{\bar R}}_l^r\} ,\bar \Omega \}$ to (P2.1) in the following, which achieves the same objective value as (SDR2.1):
\begin{subequations}
\begin{align}
{{\bf{\bar w}}_{l,i}} &= {\left( {{\bf{h}}_{l,m,k}^H{\bf{W}}_{l,i}^{**}{{\bf{h}}_{l,m,k}}} \right)^{ - \frac{1}{2}}}{\bf{W}}_{l,i}^{**}{{\bf{h}}_{l,m,k}}, \label{app:II:1}  \\
{{{\bf{\bar W}}}_{l,i}} &= {{\bf{\bar w}}_{l,i}}{\bf{\bar w}}_{l,i}^H, \label{app:II:2} \\
{\bf{\bar R}}_l^r &= \sum\limits_{i \in {\cal K}_l} {{\bf{W}}_{l,i}^{**}}  + {\bf{R}}_l^{r**} - \sum\limits_{i \in {\cal K}_l} {{{{\bf{\bar W}}}_{l,i}}}  \label{app:II:3}\\
\bar \Omega &={\Omega ^{**}}. \label{app:II:4}
\end{align}
\end{subequations}
\end{prop}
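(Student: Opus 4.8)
The goal is to certify that the point $\{\{{\bf{\bar W}}_{l,i}\},\{{\bf{\bar R}}_l^r\},\bar\Omega\}$ defined in \eqref{app:II:1}--\eqref{app:II:4} is feasible for (P2.1)---rank-one in every ${\bf{\bar W}}_{l,i}$ and with ${\bf{\bar R}}_l^r\succeq{\bf{0}}$---while attaining the objective value $\Omega^{**}$ of the relaxation (SDR2.1). Because (SDR2.1) is obtained from (P2.1) by dropping only the rank constraints \eqref{prob1.1:con:3}, its optimal value upper-bounds that of (P2.1); hence producing a rank-one feasible point of (P2.1) with value $\Omega^{**}$ simultaneously shows that this point is optimal for (P2.1) and that the relaxation is tight. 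The plan is therefore to verify the constraints one by one, reusing the construction already validated for Type-I receivers in Proposition~1.

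The organizing observation is that both the objective $\hat f^{\rm I}$ and the power budget \eqref{prob1.1:con:2} see the per-BS variables only through the aggregate ${\bf{S}}_l\triangleq\sum_{i\in{\cal K}_l}{\bf{W}}_{l,i}+{\bf{R}}_l^r$. Substituting \eqref{app:II:3} gives $\sum_{i}{\bf{\bar W}}_{l,i}+{\bf{\bar R}}_l^r=\sum_{i}{\bf{W}}_{l,i}^{**}+{\bf{R}}_l^{r**}$, so each ${\bf{S}}_l$ is invariant under the construction; the operation only reallocates power between the information covariances and the sensing covariance. Consequently $\hat f^{\rm I}$ is unchanged at every sampled location $q$, so that $\bar\Omega=\Omega^{**}$ satisfies \eqref{prob1.1:t} with the identical value, and \eqref{prob1.1:con:2} continues to hold.

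I would then settle the two structural requirements. Rank-one of ${\bf{\bar W}}_{l,i}$ is immediate from \eqref{app:II:2}. For ${\bf{\bar R}}_l^r\succeq{\bf{0}}$ the key inequality is ${\bf{W}}_{l,i}^{**}\succeq{\bf{\bar W}}_{l,i}$ for each $i$: setting ${\bf{u}}=({\bf{W}}_{l,i}^{**})^{1/2}{\bf{h}}_{l,l,i}$, the definitions \eqref{app:II:1}--\eqref{app:II:2} yield ${\bf{W}}_{l,i}^{**}-{\bf{\bar W}}_{l,i}=({\bf{W}}_{l,i}^{**})^{1/2}\big({\bf{I}}-{\bf{u}}{\bf{u}}^H/\|{\bf{u}}\|^2\big)({\bf{W}}_{l,i}^{**})^{1/2}\succeq{\bf{0}}$, since the bracketed factor is an orthogonal projector; here ${\bf{h}}_{l,l,i}$ denotes the direct channel of CU $i$ in cell $l$ and $\|{\bf{u}}\|^2>0$ because the SINR feasibility of that CU forces a positive desired-signal power. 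Summing these relations over $i$ and invoking ${\bf{R}}_l^{r**}\succeq{\bf{0}}$ shows through \eqref{app:II:3} that ${\bf{\bar R}}_l^r\succeq{\bf{0}}$.

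The remaining, and main, step is the Type-II SINR constraint \eqref{prob22.1:con:2}. Here the decisive simplification over the Type-I case is that ${\bf{R}}_l^r$ does \emph{not} enter \eqref{prob22.1:con:2}, so only the information covariances matter and the sensing reallocation is irrelevant to this constraint. Using the Cauchy--Schwarz inequality for the positive semidefinite form $\langle{\bf{a}},{\bf{b}}\rangle={\bf{a}}^H{\bf{W}}_{l,i}^{**}{\bf{b}}$, the choice \eqref{app:II:1} preserves each \emph{desired} term exactly, ${\bf{h}}_{m,m,k}^H{\bf{\bar W}}_{m,k}{\bf{h}}_{m,m,k}={\bf{h}}_{m,m,k}^H{\bf{W}}_{m,k}^{**}{\bf{h}}_{m,m,k}$, while it does not increase any \emph{interference} term, ${\bf{h}}_{l,m,k}^H{\bf{\bar W}}_{l,i}{\bf{h}}_{l,m,k}=|{\bf{h}}_{l,m,k}^H{\bf{W}}_{l,i}^{**}{\bf{h}}_{l,l,i}|^2/({\bf{h}}_{l,l,i}^H{\bf{W}}_{l,i}^{**}{\bf{h}}_{l,l,i})\le{\bf{h}}_{l,m,k}^H{\bf{W}}_{l,i}^{**}{\bf{h}}_{l,m,k}$. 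The left-hand side of \eqref{prob22.1:con:2} thus cannot grow while the right-hand side is unchanged, so the constraint stays satisfied. I expect this SINR step to be the crux: all feasibility rests on the invariance of ${\bf{S}}_l$ together with this single Cauchy--Schwarz bound, and the Type-II version is in fact slightly easier than Proposition~1 precisely because cancellation of the sensing interference removes ${\bf{R}}_l^r$ from the SINR.
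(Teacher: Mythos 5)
Your proposal is correct and follows the same rank-one reconstruction that the paper itself uses; the paper's proof of Proposition~2 is simply deferred to Appendix~A (``similar\ldots details omitted''), so the useful comparison is with the Type-I argument there. The one place where ``similar'' is not literal is exactly the place you flag: in Appendix~A the Type-I SINR constraint \eqref{prob1.1:con:1} is preserved \emph{with equality}, because its left-hand side depends on the per-BS variables only through the invariant aggregate $\sum_{i}\mathbf{W}_{l,i}+\mathbf{R}_l^r$, whereas the Type-II constraint \eqref{prob22.1:con:2} omits $\mathbf{R}_l^r$, so the reallocation of the residual power into $\bar{\mathbf{R}}_l^r$ is no longer neutral and one must instead show that the interference terms do not increase. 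Your Cauchy--Schwarz bound $\mathbf{h}_{l,m,k}^H\bar{\mathbf{W}}_{l,i}\mathbf{h}_{l,m,k}\le \mathbf{h}_{l,m,k}^H\mathbf{W}_{l,i}^{**}\mathbf{h}_{l,m,k}$ (equivalently $\bar{\mathbf{W}}_{l,i}\preceq\mathbf{W}_{l,i}^{**}$ via the projector identity) supplies precisely this step, and the same semidefinite ordering also yields $\bar{\mathbf{R}}_l^r\succeq\mathbf{0}$ --- a feasibility check that the paper's appendix leaves implicit even in the Type-I case. The only cosmetic caveat is the index ambiguity inherited from the paper's notation: the channel appearing in \eqref{app:II:1} must be read as the direct channel of the intended CU (your $\mathbf{h}_{l,l,i}$), as you correctly do; with that reading, your argument is complete and, if anything, more rigorous than the reference proof.
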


\begin{proof}
The proof is similar to that in Appendix \ref{appendixA}, for which the details are omitted.
\end{proof}

\subsection{Scenario \uppercase\expandafter{\romannumeral2} without BSs Synchronization }

Next, we consider Scenario II without BSs synchronization. In this scenario, the SINR-constrained minimum detection probability maximization problems of Type-I and Type-II CU receivers are respectively formulated as problems (P3) and (P4) in the following, which are similar to problems (P1) and (P2) by replacing $\cal{E}_{\rm{I}}$ in \eqref{energyI} as $\cal{E}_{\rm{II}}$ in \eqref{energyII}, respectively:
\begin{align}
\left( {\rm{P3}} \right): \mathop {\max }\limits_{\left\{ {\bf{w}}_{m,i}, {\bf{R}}_m^r\succeq \bf{0} \right\}} ~ &~ \mathop {\min }\limits_{q\in \cal{Q}}  ~ {f^{{\rm{II}}}}\left( {\{ {{\bf{w}}_{m,i}}\} ,\{ {\bf{R}}_m^r\} } \right)  \\
~{\rm{s}}{\rm{.t}}{\rm{.}}&~~\eqref{prob:con:1}~{\text{and}}~\eqref{prob:con:2}. \nonumber
\end{align}

\begin{align}
\left( {\rm{P4}} \right): \mathop {\max }\limits_{\left\{ {\bf{w}}_{m,i},{{\bf{R}}_m^r\succeq \bf{0}} \right\}} ~ &~ \mathop {\min }\limits_{q\in \cal{Q}}  ~ {f^{{\rm{II}}}}\left( {\{ {{\bf{w}}_{m,i}}\} ,\{ {\bf{R}}_m^r\} } \right)  \\
~{\rm{s}}{\rm{.t}}{\rm{.}}&~~\eqref{prob:con:2}~{\text{and}}~\eqref{prob22:con:1}. \nonumber
\end{align}
where
\begin{align}
{f^{{\rm{II}}}}\left( {\{ {{\bf{w}}_{m,i}}\} ,\{ {\bf{R}}_m^r\} } \right) &= \sum\limits_{m \in {\cal L}} \zeta _{m,m}^2\beta _{m,m}^{(q)}{\rm{tr}}( ( \sum\limits_{i \in {\cal K}_m} {{{\bf{w}}_{m,i}}} {\bf{w}}_{m,i}^H \nonumber \\
&+ {\bf{R}}_m^r ){{\bf{A}}_m}( {\theta _m^{(q)}} ) ).
\end{align}

As problems (P3) and (P4) have similar structures as problems (P1) and (P2), respectively, they can also be solved optimally based on the SDR. More specifically, by introducing the auxiliary variable $\Omega$, and defining ${{\bf{W}}_{m,i}} ={\bf{w}}_{m,i}{\bf{w}}_{m,i}^H \succeq \bf{0}$ with ${\rm {rank}} ({\bf{W}}_{m,i})\le 1$, problems (P3.1) and (P4.1) can be reformulated equivalently as

\begin{subequations}
\begin{align}
\left( {\rm{P3.1}} \right):&\mathop {\max }\limits_{\left\{ {{{\bf{W}}_{m,i}}\succeq \bf{0}},{\bf{R}}_m^r\succeq {\bf{0}},\Omega \right\}} ~\Omega    \label{prob:4.1} \\
&~~~~~~~~{\rm{s.t.}}~  {{\hat f}^{{\rm{II}}}}\left( {\{ {{\bf{W}}_{m,i}}\} ,\{ {\bf{R}}_m^r\} } \right)  \ge \Omega, \forall q \in {\cal Q}, \label{prob4.1:t} \\
&~~~~~~~~\eqref{prob1.1:con:1}, ~\eqref{prob1.1:con:2},~\text{and}~\eqref{prob1.1:con:3},\nonumber
\end{align}
\end{subequations}
and
\begin{subequations}
\begin{align}
\left( {\rm{P4.1}} \right):&\mathop {\max }\limits_{\left\{ {{{\bf{W}}_{m,i}}\succeq \bf{0}},{\bf{R}}_m^r \succeq {\bf{0}},\Omega \right\}} ~\Omega    \\
&~~~~~~~~{\rm{s.t.}}~\eqref{prob4.1:t},~\eqref{prob22.1:con:2},~ \eqref{prob1.1:con:2},~\text{and}~\eqref{prob1.1:con:3},\nonumber
\end{align}
\end{subequations}
respectively, where
\begin{align}
{{\hat f}^{{\rm{II}}}}\left( {\{ {{\bf{W}}_{m,i}}\} ,\{ {\bf{R}}_m^r\} } \right) &= \sum\limits_{m \in {\cal L}} \zeta _{m,m}^2\beta _{m,m}^{(q)}{\rm{tr}}( ( \sum\limits_{i \in {\cal K}_m} {{{\bf{W}}_{m,i}}} \nonumber \\  &+ {\bf{R}}_m^r ){{\bf{A}}_m}( {\theta _m^{(q)}} ) )
\end{align}

Then, we drop the rank-one constraints on $\{ {\bf W}_{m,i}\}$ in \eqref{prob1.1:con:3} to obtain the SDR versions of (P3.1) and (P4.1) as (SDR3.1) and (SDR4.1), respectively, which are convex and can be solved optimally.




Note that problems (SDR3.1) and (SDR4.1) generally have high rank optimal solutions, which may not satisfy the rank-one constraints in \eqref{prob1.1:con:3} for (P3.1) and (P4.1). Fortunately, by following the similar concepts as in {\textit{Propositions 1}} and {\textit{Propositions 2}} for the SDRs of (P3.1) and (P4.1), one can show that the optimal rank-one solutions to (P3.1) and (P4.1) can always be constructed. The details of the derivations are thus omitted for brevity.

\begin{remark}
Comparing problem (P1.1) (or (P2.1)) in Scenario I with problem (P3.1) (or (P4.1)) in Scenario II, we observe that the feasible solutions of (P1.1) (or (P3.1)) is a subset of (P2.1) (or (P4.1)), but not vice versa. Thus, we conclude that problem (P2.1) (or (P4.1)) can always achieve a higher optimal objective value or at least equal to that of (P1.1) (or (P3.1)), since (P2.1) and (P4.1) enjoy a larger feasible solution set than (P1.1) and (P3.1), respectively. Moreover, via extensive simulations in the next section, we observe that for Type-I receivers without sensing signal interference cancellation, the optimal solutions to (P2.1) and (P4.1) satisfy ${\bf{ R}}_l^r={\bf 0}$, which shows that employing dedicated sensing signals is not necessary. Intuitively, this is because the dedicated sensing signals would introduce harmful interference for communications in this case.
\end{remark}

\section{Numerical Results}

In this section, we provide numerical results to validate the performance of our proposed coordinated transmit beamforming designs for the multi-antenna networked ISAC system.

\subsection{Benchmark Schemes}
First, we consider the following benchmark schemes for performance comparison.

\begin{itemize}
\item {\bf ISAC with ZF information beamforming}: In this scheme, we apply coordinated ZF beamforming \cite{behdad2022power, heath2018foundations}. Let ${{{\bf{\bar H}}}_{m,m,k}} = [{{\bf{h}}_{m,1,1}}, \ldots ,{{\bf{h}}_{m,m,k - 1}},{{\bf{h}}_{m,m,k + 1}}, \ldots {{\bf{h}}_{m,L,K}}]$, and ${{{\bf{\bar H}}}_{m,m,k}} = {{{\bf{\bar U}}}_{m,m,k}}{{\bf{\bar \Lambda }}_{m,m,k}}{\bf{\bar V}}_{ m,m,k}^H$ denotes the application of singular value decomposition (SVD) on ${{{\bf{\bar H}}}_{m,m,k}}$, where ${{\bf{\bar U}}_{m,m,k}} = [{\bf{\bar U}}_{ m,m,k}^{{\rm{\bar n\bar u\bar l\bar l}}}{\bf{\bar U}}_{ m,m,k}^{{\rm{null}}}]$ and ${\bf{\bar U}}_{m,m,k}^{{\rm{null}}} \in {{\mathbb C}^{{N_t} \times ({N_t} - {L^2}K + 1)}}$. The ZF transmit beamforming at BS $m$ for CU $k$ is designed as
    \begin{align}
    {\bf{w}}_{m,k}^{{\rm{ZF}}} &= \frac{{\sqrt {p_{m,k}^{{\rm{ZF}}}} {\bf{\bar U}}_{m,m,k}^{{\rm{null}}}{\bf{\bar U}}_{m,m,k}^{{\rm{null }}H}{{\bf{h}}_{m,m,k}}}}{{\left\| {{\bf{\bar U}}_{ m,m,k}^{{\rm{null}}}{\bf{\bar U}}_{m,m,k}^{{\rm{null }}H}{{\bf{h}}_{m,m,k}}} \right\|}}, \nonumber \\
    &~~\forall m \in {\cal L}, \forall k \in {\cal K}_m, \label{zfbf}
    \end{align}
      where ${p_{m,k}^{{\rm{ZF}}}}$ denotes the power for CU $k$ by BS  $m$, which is a variable to be optimized. Accordingly, the power constraint at each BS $m$ becomes$\sum\limits_{k \in {\cal K}} {p_{m,k}^{{\rm{ZF}}}}  \le {P_{\max }}$. By substituting ${\bf{w}}_{m,k}^{{\rm{ZF}}}$ in \eqref{zfbf} into problems (P1.1), (P2.1), (P3.1), and (P4.1), we obtain the corresponding power allocation problems (ZF1.1), (ZF2.1), (ZF3.1), and (ZF4.1), which can be optimized similarly as in Section IV for obtaining the optimal coordinated power control solutions.
\item {\bf Joint detection via dedicated sensing signals}: In this scheme, the BSs only employ the dedicated sensing signals for joint detection. The corresponding detection probabilities in Scenario I and Scenario II are respectively given by
    \begin{align}
    p_{\mathrm D}^{\rm{I}} = Q\left( {{Q^{ - 1}}\left( {p_{\mathrm {FA}}^{\rm{I}}} \right) - \sqrt {\frac{{2{{\hat {\cal E}}_{\rm{I}}}}}{{\sigma _d^2}}} } \right),\\
    p_{\mathrm D}^{{\rm{II}}} = Q\left( {{Q^{ - 1}}\left( {p_{\mathrm {FA}}^{{\rm{II}}}} \right) - \sqrt {\frac{{2{{\hat {\cal E}}_{{\rm{II}}}}}}{{\sigma _d^2}}} } \right),
    \end{align}
    where
      \begin{align}
      {{\hat {\cal E}}_{\rm{I}}} = {N_r}\sum\limits_{l \in {\cal L}} {\sum\limits_{m \in {\cal L}} {{\zeta_{m,l} ^2}{\beta _{m,l}}{\rm{tr}}( {{\bf{R}}_l^r} {{\bf{A}}_l}( {\theta _l^{(q)}} ) )} },
      \end{align}
      \begin{align}
      {{\hat {\cal E}}_{\rm{II}}} = {N_r}{\sum\limits_{m \in {\cal L}} {{\zeta_{m,m} ^2}{\beta _{m,m}}{\rm{tr}}( {{\bf{R}}_m^r} {{\bf{A}}_m}( {\theta _m^{(q)}} ) )} },
      \end{align}
      denote the correspondingly received echo signals of the dedicated sensing signals.
      Accordingly, we optimize the joint beamforming by solving problems (P1)-(P4) via replacing ${{ {\cal E}}_{\rm{I}}}$ and ${{ {\cal E}}_{\rm{II}}}$ as ${{\hat {\cal E}}_{\rm{I}}}$ and ${{\hat {\cal E}}_{\rm{II}}}$, respectively.

\end{itemize}


\subsection{Simulation Results }

In the simulation, we consider the networked ISAC scenario with $L = 3 $ BSs as shown in Fig. 2, where each BS serves one CU or multiple CUs. Each BS is deployed with a ULA with half a wavelength spacing between the antennas. The noise powers are set as $\sigma^2_c = -84$ dBm and $\sigma^2_d = -102$ dBm. The SINR constraints at the CUs are set to be identical, i.e.,  $\Gamma_{m,k}=\Gamma$, $\forall k \in {\cal K}_m, m \in {\cal L}$. The coordinates of the three BSs are set as $\left( {80~\text{m},0~\text{m}} \right)$, $\left({-40~\text{m},40\sqrt{3}~\text{m}}\right)$, and $\left( {-40~\text{m},-40\sqrt{3} ~\text{m}} \right)$, respectively. The numbers of transmit and receive antennas at each BSs are $N_t=N_r=N_a=32$. Furthermore, the path loss between each BS and CU is given by $\mu_{l,i}  = \hat \kappa  {[\frac{{{d_0}}}{{{d_{l,i}}}}]^\nu }$, where $\hat \kappa$ denotes the path loss at the reference distance of $d_0 = 1$ meter and $\nu$ denotes the path loss exponent. In addition, the targeted area is set as a square region with an area of $2 \times 2=4$ ${\text m}^2$ centering at origin $\left( {0~\text{m} ,0~\text{m}} \right)$, we take $M=9$ sample locations that are uniformly distributed in the targeted area.

\begin{figure}[htbp]
\centering
    \includegraphics[width=8cm]{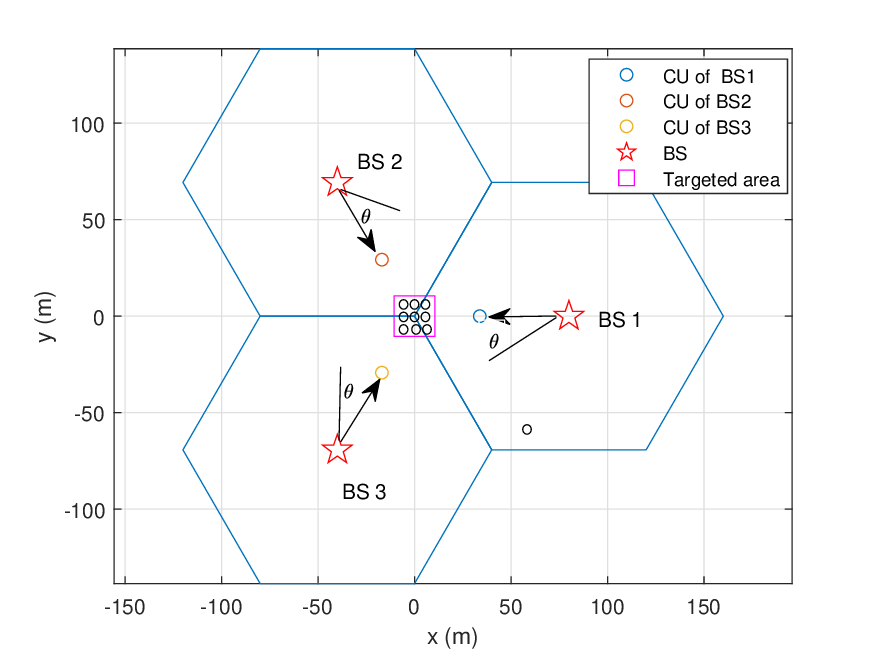}
\caption{The ISAC system has 3 BSs, each of which serves the corresponding CU for communication and detects the potential target at targeted area.}
\label{fig:2}
\end{figure}

Firstly, we consider the case when there is only $K=1$ CU served by each BS, thus the system has $LK=3$ CUs in total. In particular, we consider the Rayleigh fading channel in communication from each BS to CU, and that the three CU locations are located at  $(38.85~\text{m},-20.97~\text{m})$, $(-1.26~\text{m},44.13~\text{m})$, and $(-37.58~\text{m},-23.16~\text{m})$, respectively.

\begin{figure}[htbp]
\centering
\subfigure[Scenario I.]{\includegraphics[width=8cm]{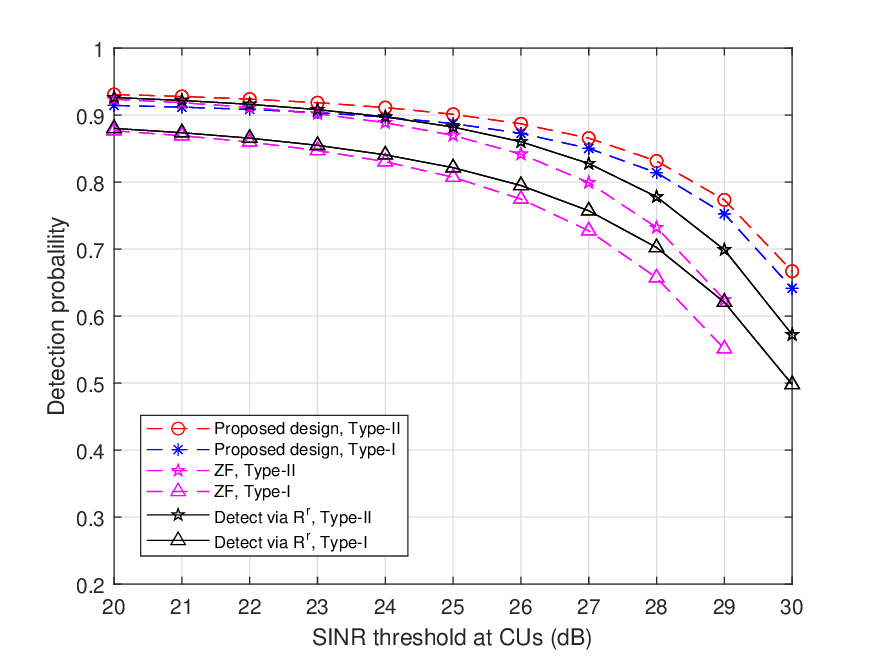}}
\subfigure[Scenario II.]{\includegraphics[width=8cm]{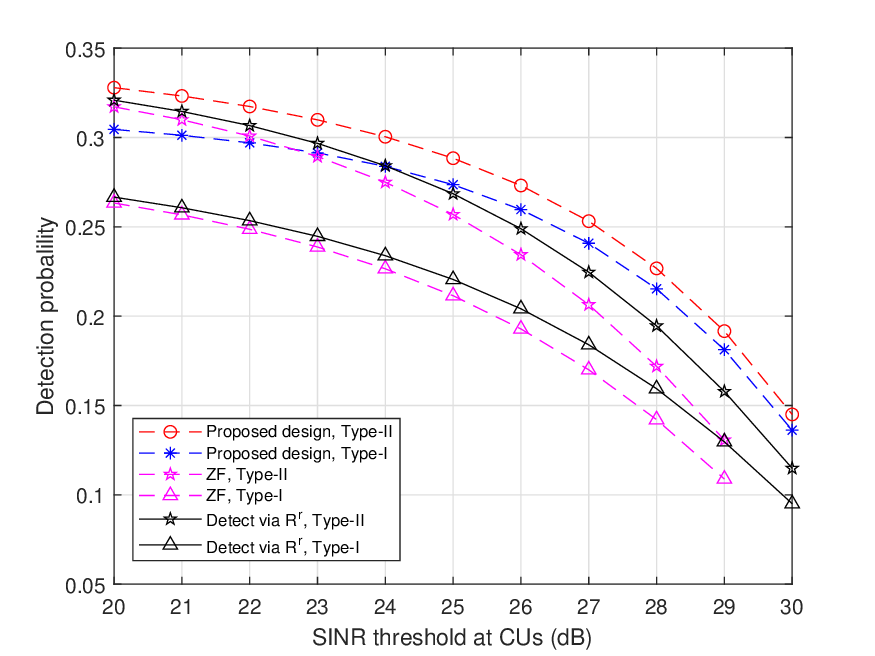}}
\caption{The detection probability versus the SINR requirement at the CUs in Scenario I and Scenario II, where $P_{\max}=15$ W, $K=1$, and $p_{\mathrm {FA}}=10^{-3}$.}
\label{fig:3}
\end{figure}

Fig. 3 shows the detection probability $p_{\mathrm D}$ versus the SINR requirement $\Gamma$ at the CUs in Scenario I and Scenario II, in which the maximum transmit power constraint $P_{\max}=15$W and the false alarm probability $p_{\mathrm {FA}}=10^{-3}$. It is observed that for all the schemes, the detection probability decreases with an increasing SINR requirement. This is due to the fact that when the communication requirement becomes stringent, the BSs need to steer the transmit beamformers towards the CUs, thus leading to less power being reflected by the target location and jeopardizing the performance of detection. It is also observed that the proposed design with Type-II receivers achieves the highest detection probability compared to other schemes under same channel conditions in both Scenario I and Scenario II. When $\Gamma$ becomes large, the performance gaps between the proposed design and the benchmark schemes are enlarged. This shows the importance of joint communication and sensing coordinated transmit beamforming. Furthermore, with a large value of $\Gamma$, the performance achieved by Type-I CU receivers approaches that of their Type-II counterparts. This is due to the fact that in this case, more power should be allocated to information signals. As a result, the power of dedicated sensing signals and the resultant interference become smaller, thus making the gain of sensing interference cancellation marginal.

\begin{figure}[htbp]
\centering
\subfigure[Scenario I.]{\includegraphics[width=8cm]{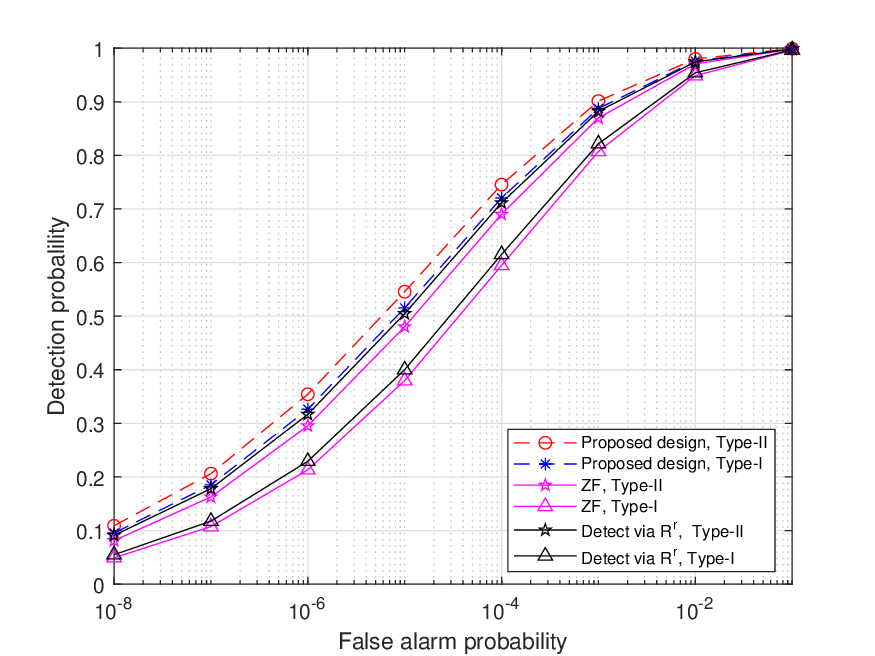}}
\subfigure[Scenario II.]{\includegraphics[width=8cm]{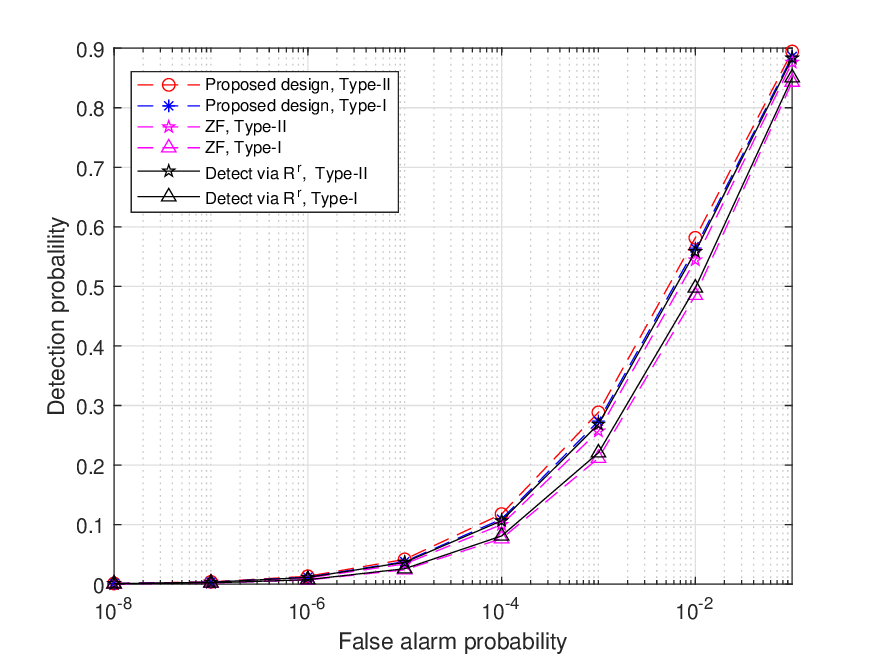}}
\caption{The detection probability versus the false alarm probability in Scenario I and Scenario II, where $P_{\max}=15$ W, $K=1$, and $\Gamma=25$ dB.}
\label{fig:4}
\end{figure}

Fig. 4 shows the detection probability $p_{\mathrm D}$ versus the false alarm probability $p_{\mathrm {FA}}$ in the two scenarios for the two types of CU receivers with $P_{\max}=15$ W and $\Gamma = 25$ dB. For all the schemes, it is observed that the detection probability increases towards one as the false alarm probability becomes large, as correctly predicted by \eqref{pdI} and \eqref{pdII}. It is also observed that the detection probability achieved in Scenario II is much smaller than that in Scenario I under the same setup. This gain is attributed to the joint exploitation of both direct and cross echo links in Scenario I, thanks to the time synchronization among different BSs.  Furthermore, the schemes with Type-II receivers are observed to achieve higher detection probability than the counterparts with Type-I receivers, thus validating again the benefit of dedicated sensing signals together with interference cancellation in enhancing the networked ISAC performance.

\begin{figure}[htbp]
\centering
\subfigure[Scenario I.]{\includegraphics[width=8cm]{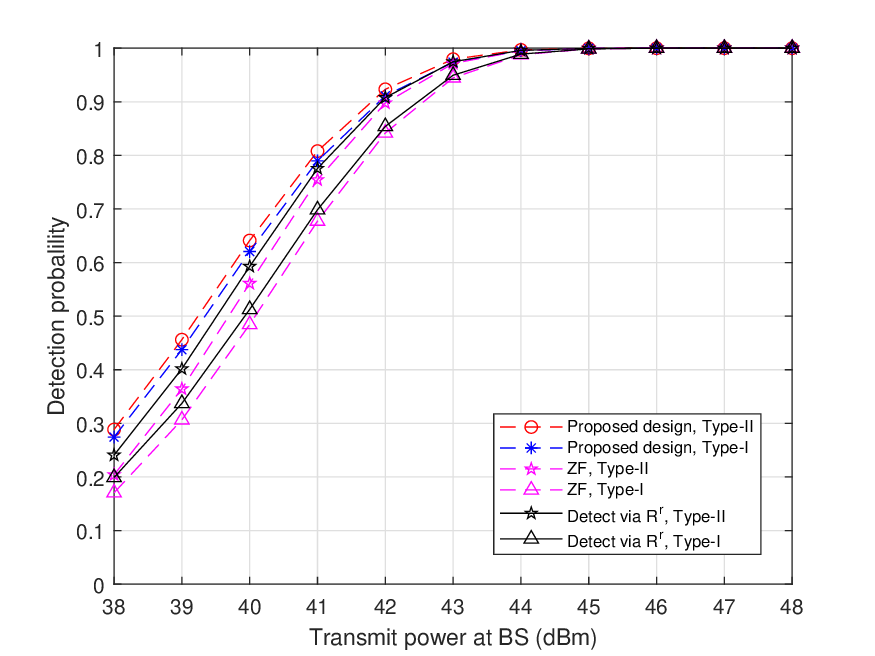}}
\subfigure[Scenario II.]{\includegraphics[width=8cm]{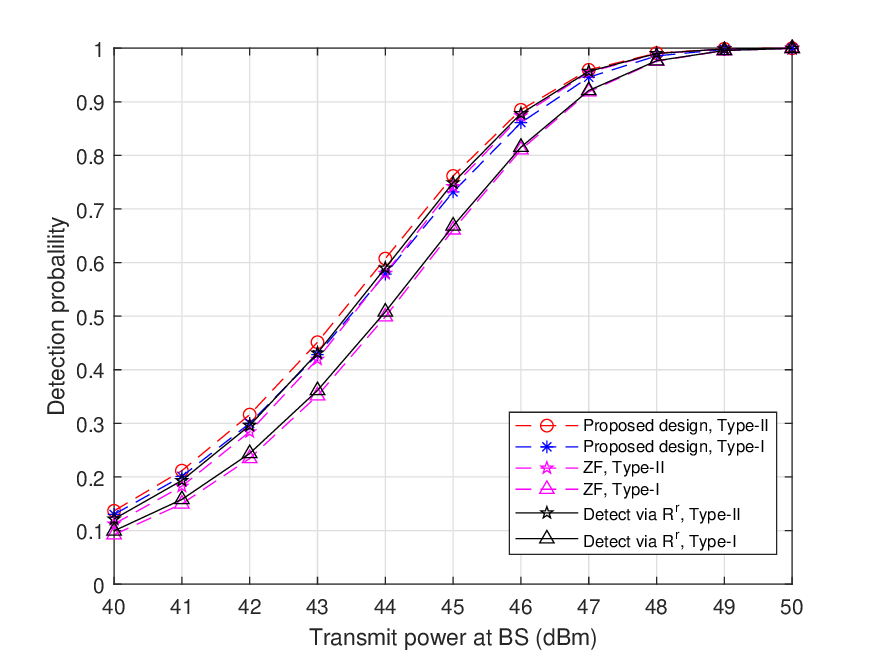}}
\caption{The detection probability versus the transmit power at each BSs in Scenario I and Scenario II, where $\Gamma=25$ dB, $K=1$, and $p_{\mathrm {FA}}=10^{-3}$.}
\label{fig:5}
\end{figure}

Fig. 5 shows the detection probability $p_{\mathrm D}$ versus the maximum transmit power budget $P_{\max}$ at each BS with $\Gamma = 25$ dB and $p_{\mathrm {FA}}=10^{-3}$. Similar observations can be made as in Figs. 3 and 4, demonstrating the performance gains achieved by the proposed joint optimization framework.


Next, we consider the case when each CU is located at a different angle with respect to its home BS, in which the LoS channel is considered in communication from each BS to its respective CU. In particular, as shown in Fig. 2, the CUs in different cells are located at the same angle $\theta$ with respect to the corresponding BS, and the distance between each BS and the correspondingly associated CU is $45~\text{m}$.
\begin{figure}[htbp]
\centering
\subfigure[Scenario I.]{\includegraphics[width=8cm]{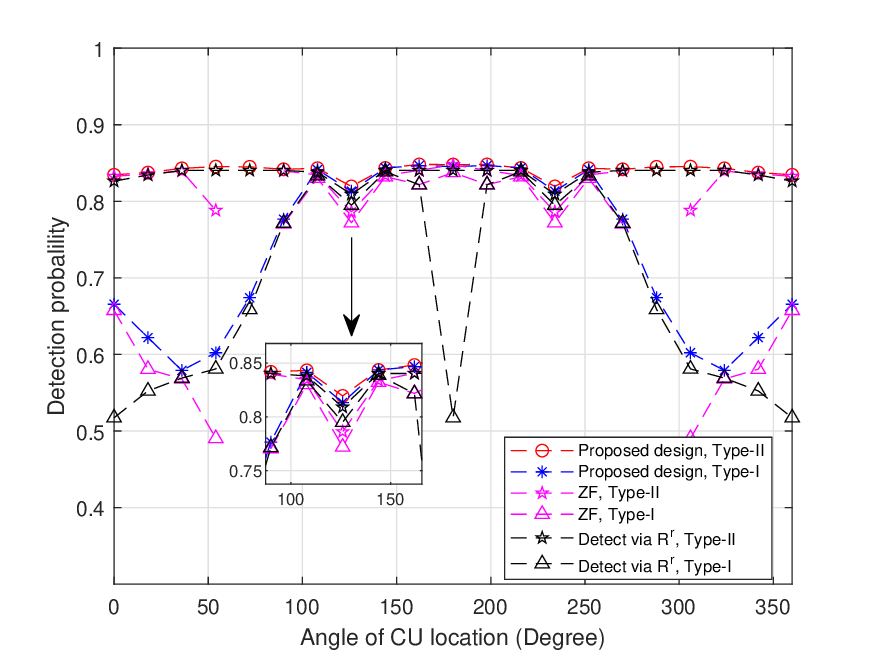}}
\subfigure[Scenario II.]{\includegraphics[width=8cm]{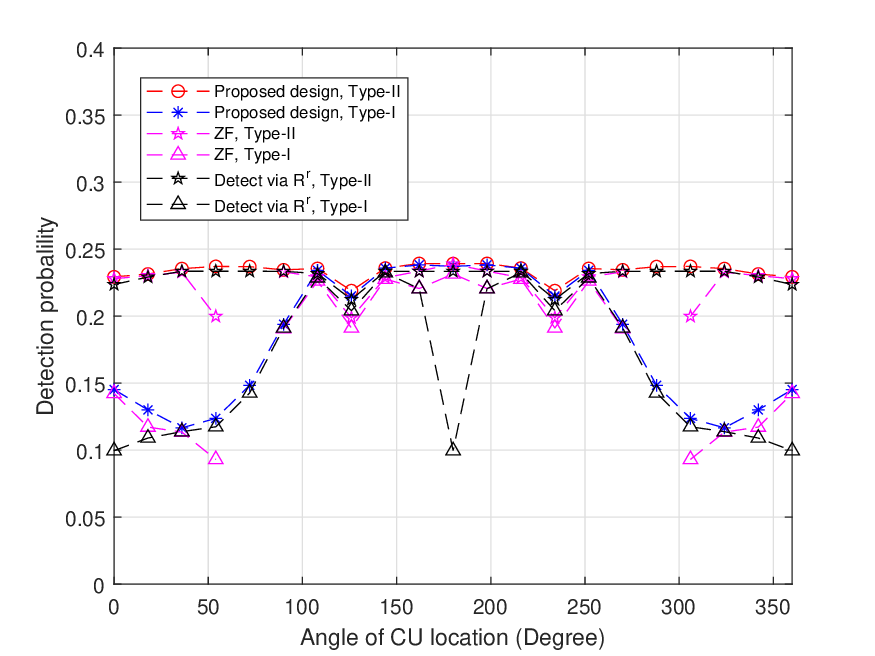}}
\caption{The detection probability versus the angle $\theta$ at which the CU position has been rotated, where $P_{\max}=12$ W, $\Gamma=30$ dB, $K=1$, and $p_{\mathrm {FA}}=10^{-3}$.}
\label{fig:7}
\end{figure}

Fig. 6 shows the detection probability $p_{\mathrm D}$ versus the angle $\theta$ with $P_{\max}=12$ W, $\Gamma=30$ dB, and $p_{\mathrm {FA}}=10^{-3}$. It is observed that when $\theta$ is close to  ${0^ \circ }$ and ${360^ \circ }$, the scheme with Type-II receivers significantly outperforms that with Type-I receivers. This is because the CUs are located at similar angles as the targeted area in this case, and accordingly, the interference caused by sensing signals becomes severe. By contrast, when $\theta$ is close to $110^\circ$, such performance gap is observed to become less. This is due to the fact that the CUs are located at different angles from the targeted area. As a result, the information and sensing beamformers can be steered toward different directions for communication and sensing, respectively, with minimized interference.


\begin{figure}[!t]
\centering
    \includegraphics[width=8cm]{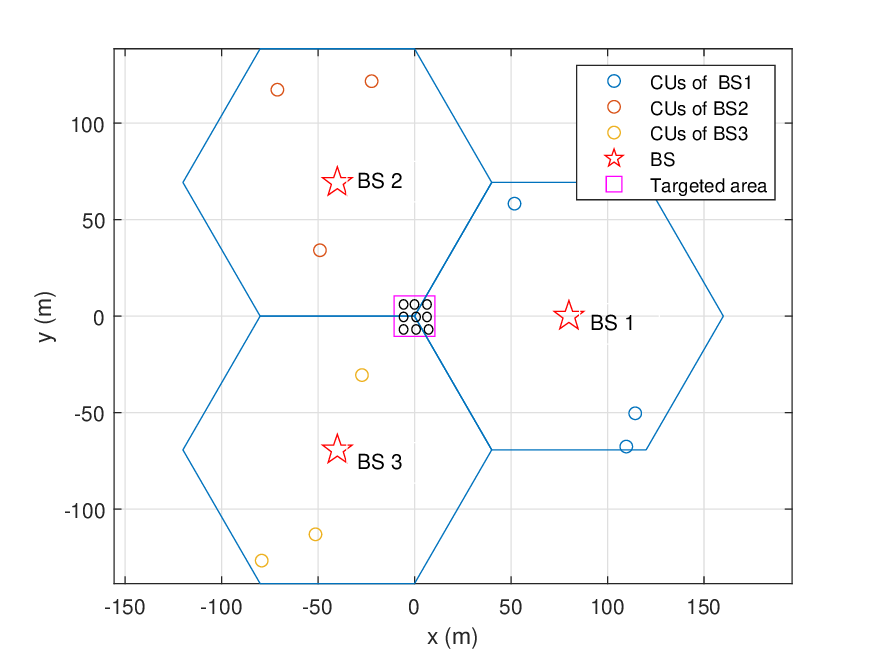}
\caption{Networked ISAC topology with 3 BSs each serves 3 CUs (CU locations are randomly generated).}
\label{fig:7}
\end{figure}

\begin{figure}[htbp]
\centering
\subfigure[Scenario I.]{\includegraphics[width=8cm]{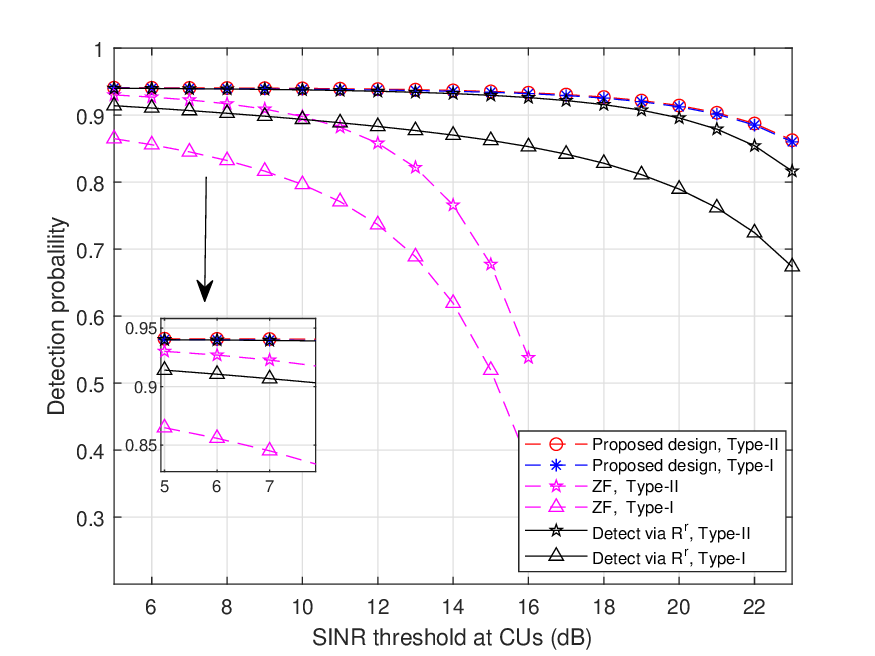}}
\subfigure[Scenario II.]{\includegraphics[width=8cm]{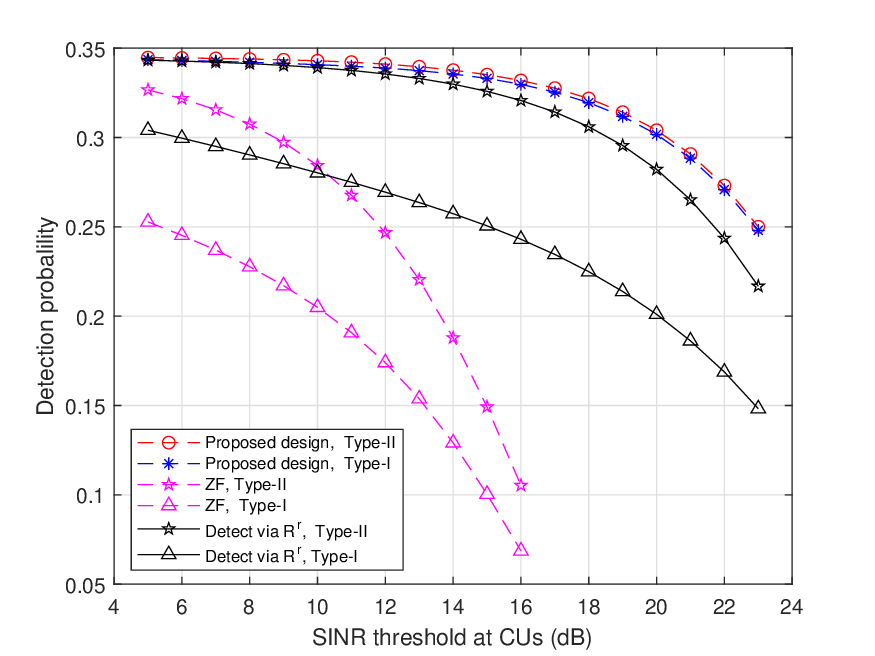}}
\caption{The detection probability versus the SINR requirement at CUs in Scenario I and Scenario II, where $P_{\max}=15$W, $\Gamma=15$ dB, $K=3$, and $p_{\mathrm {FA}}=10^{-3}$.} \label{fig:8}
\end{figure}

\begin{figure}[htbp]
\centering
\subfigure[Scenario I.]{\includegraphics[width=8cm]{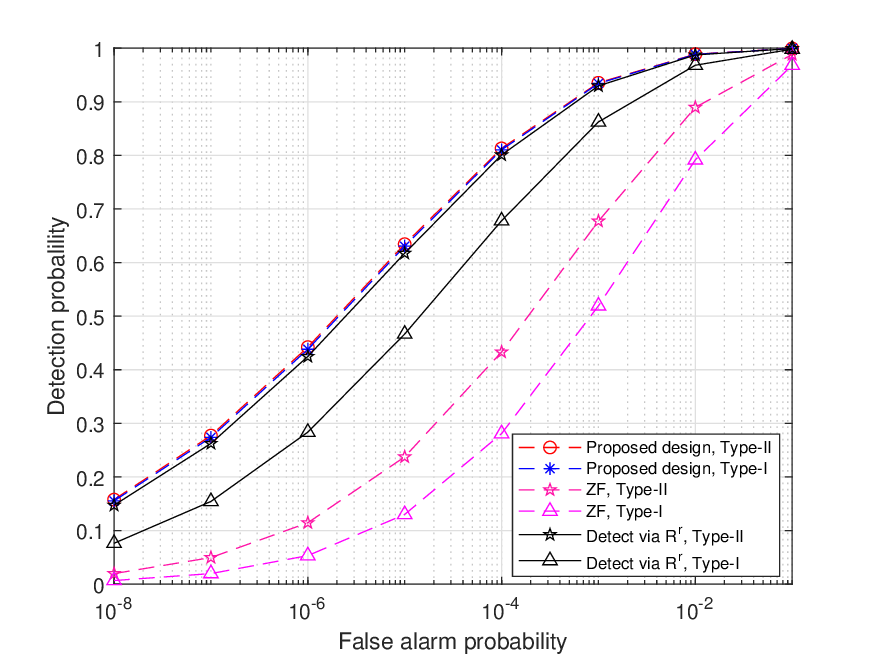}}
\subfigure[Scenario II.]{\includegraphics[width=8cm]{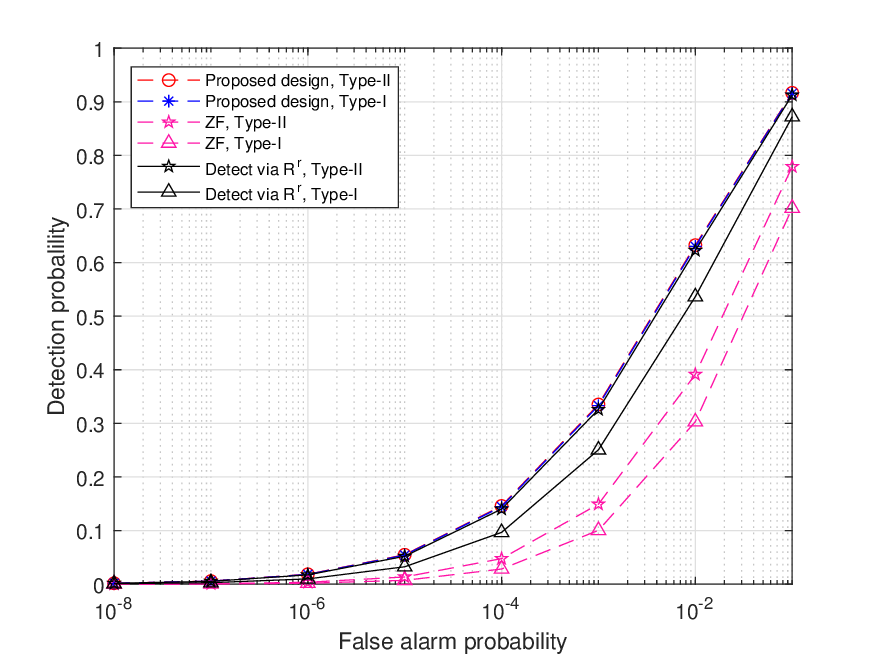}}
\caption{The detection probability versus the false alarm probability in Scenario I and Scenario II, where $P_{\max}=15$ W, $\Gamma=15$ dB, $K=3$, and $p_{\mathrm {FA}}=10^{-3}$.} \label{fig:9}
\end{figure}

\begin{figure}[htbp]
\centering
\subfigure[Scenario I.]{\includegraphics[width=8cm]{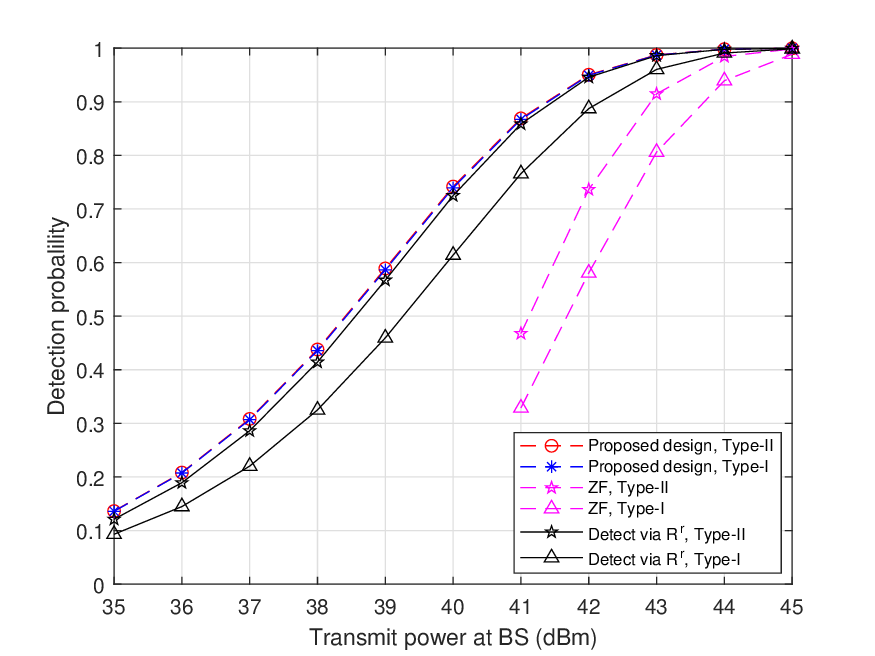}}
\subfigure[Scenario II.]{\includegraphics[width=8cm]{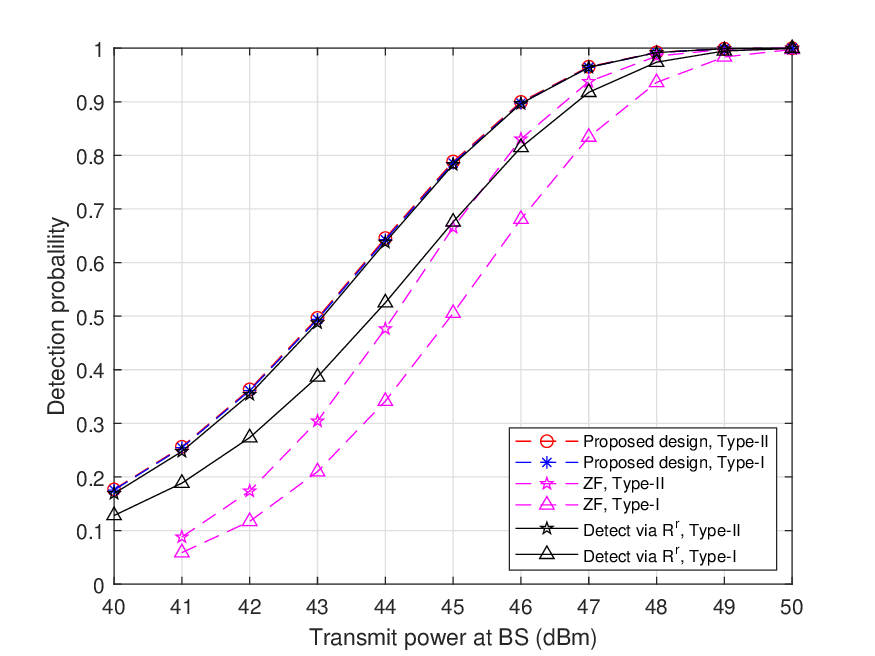}}
\caption{The detection probability versus the transmit power in Scenario I and Scenario II, where $\Gamma=15$ dB, $K=3$, and $p_{\mathrm {FA}}=10^{-3}$.} \label{fig:10}
\end{figure}

Furthermore, we consider that each BS serves $K=3$ CUs, in which the Rayleigh fading channel model is considered from each BS to its respective CU. The locations of the CUs are generated as shown in Fig. 7. Fig. 8 shows the detection probability $p_{\mathrm D}$ versus the SINR requirement $\Gamma$ at each CU, where $P_{\max}=15$ W, $\Gamma=15$ dB, $p_{\mathrm {FA}}=10^{-3}$. Fig. 9 shows the detection probability $p_{\rm D}$ versus the false alarm probability $p_{\rm{FA}}$ with $P_{\max} = 15$ W,  $\Gamma = 15$ dB, and $p_{\rm{FA}} = 10^{-3}$. Fig. 10 shows the detection probability $p_{\mathrm D}$ versus the power budget $P_{\max}=15$ at each BS, with $\Gamma = 15$ dB, and $p_{\rm{FA}} = 10^{-3}$. By comparing these figures to Figs. 3, 4, and 5 for the case with $K=1$, it is observed that the proposed design with Type-I receivers achieves similar performance as that with Type-II receivers, which means that the gain brought by the dedicated signals becomes marginal in this case. This is due to the fact that when there are more CUs in each cell, we have a larger number of information beams that can provide sufficient degrees of freedom for target sensing, thus making the benefit of dedicated sensing signals limited or even not necessary. It is also observed that the benchmark scheme with ZF-based information beamforming performs significantly worse than the proposed designs. This is because with more CUs, the inter-user interference becomes more severe, and thus the ZF-based design leads to degraded performance due to the limited available degrees of freedom for effective interference suppression.

\section{Conclusion}
This paper studied the joint multi-cell communication and distributed MIMO radar detection in a networked ISAC system, in which a set of multi-antenna BSs employed the coordinated transmit beamforming to serve their associated single-antenna CUs, and at the same time utilized the dedicated sensing signals together with their communication signals for target detection. Two joint detection scenarios with and without time synchronization among the BSs were considered, for which the detection probability and the false alarm probability were derived in closed forms. Accordingly, we developed the coordinated transmit beamforming ISAC design to maximize the minimum detection probability (or equivalently the total received reflection-signal power) over a particular targeted area, while ensuring the SINR constraints at the CUs for communication. By considering the transmission of dedicated sensing signals, we introduced two types of CU receivers, Type-I and Type-II, without and with the capability of dedicated sensing interference cancellation, respectively. For the proposed non-convex optimization problems, we adopted the SDR technique to obtain the optimal joint beamforming solutions. Finally, numerical results showed that the proposed ISAC design achieved higher detection probability than other benchmark schemes. It was also shown that the presence of time synchronization among the BSs and dedicated sensing interference cancellation can further enhance the sensing and communication performances for networked ISAC systems. 

\appendix
\subsection{Proof of Proposition 1} \label{appendixA}
It can be verified based on \eqref{app:1:1} and \eqref{app:1:3} that ${{\bf{\tilde W}}}_{l,i}$ achieves the same objective values in (P1.1) as ${\bf{W}}_{l,i}^ *$, and satisfy the power constraints in \eqref{prob1.1:con:2}.

Next, we verify that ${{\bf{\tilde W}}}_{l,i}$ can satisfy the SINR constraints in \eqref{prob1.1:con:2} for communications. From (\ref{app:1:1}) and \eqref{app:1:2}, we obtain that
\begin{align}
{\bf{h}}_{l,m,k}^H{{{\bf{\tilde W}}}_{l,i}}{{\bf{h}}_{l,m,k}} &= {\bf{h}}_{l,m,k}^H{{{\bf{\tilde w}}}_{l,i}}{\bf{\tilde w}}_{l,i}^H{{\bf{h}}_{l,m,k}} \nonumber \\ &= {\bf{h}}_{l,m,k}^H{\bf{W}}_{l,i}^ * {{\bf{h}}_{l,m,k}}. \label{app:1:5}
\end{align}
Thus, it follows that
\begin{align}
&( {1 + \frac{1}{{\Gamma _{m,k}}}} ){\bf{h}}_{m,m,k}^H{\bf{\tilde W}}_{m,k}{{\bf{h}}_{m,m,k}}\nonumber \\
&= ( {1 + \frac{1}{{\Gamma _{m,k}}}} ){\bf{h}}_{m,m,k}^H{\bf{W}}_{m,k}^ * {{\bf{h}}_{m,m,k}} \nonumber \\
&\ge \sum\limits_{l \in {\cal L}} {{\bf{h}}_{l,m,k}^H( {\sum\limits_{i \in {\cal K}_l} {{\bf{W}}_{l,i}^*}  + {\bf{R}}_l^{r*}} ){{\bf{h}}_{l,m,k}}}  + {\sigma_c ^2} \nonumber \\
&= \sum\limits_{l \in {\cal L}} {{\bf{h}}_{l,m,k}^H( {\sum\limits_{i \in {\cal K}_l} {{\bf{\tilde W}}_{l,i}}  + {\bf{\tilde R}}_l^{r}} ){{\bf{h}}_{l,m,k}}}  + {\sigma_c ^2}. \label{qos:1.1}
\end{align}
The first equality follows from \eqref{app:1:5}, the inequality follows from  \eqref{prob1.1:con:1}, and the last equality follows from \eqref{app:1:3}. Therefore, we show that the constructed solution $\{{\bf{\tilde W}}_{l,i}\}$ and $\{{\bf{\tilde R}}_l^r\}$ also satisfy the SINR constraints in \eqref{prob1.1:con:2} for  problem (P1.1). Thus, this completes the proof of this proposition.

\bibliographystyle{IEEEtran}
\bibliography{IEEEabrv,myref}

\end{document}